\newcommand{\circspace}{\vspace{0.5cm}}
\newtheorem{theorem}{Theorem}[section]
\theoremstyle{definition}
\newtheorem{definition}{Definition}[section]
\begin{document}


\title{New techniques for fault-tolerant decomposition of Multi-Controlled Toffoli gate}

\author{Laxmidhar Biswal}
 \affiliation{School of VLSI Technology, IIEST Shibpur, India }
\author{Debjyoti Bhattacharjee}%
 \email{debjyoti001@ntu.edu.sg}
\affiliation{%
 School of Computer Science and Engineering, NTU, Singapore
}%
\author{Anupam Chattopadhyay}
\affiliation{%
 School of Computer Science and Engineering, NTU, Singapore
}
\author{Hafizur Rahaman}
\affiliation{School of VLSI Technology, IIEST Shibpur, India }

\date{\today}

\begin{abstract}
Physical implementation of scalable quantum architectures faces an immense challenge in form of fragile quantum states. To overcome it, quantum architectures with fault tolerance is desirable. This is achieved currently by using surface code along with a transversal gate set. This dictates the need for decomposition of universal Multi Control Toffoli~(MCT)
gates using a transversal gate set. Additionally, the transversal non-Clifford phase gate incurs high latency which makes it an important factor to consider during decomposition.   
Besides, the decomposition of large Multi-control Toffoli~(MCT) gate without ancilla presents an additional hurdle. In this manuscript, we address both of these issues by introducing Clifford+$Z_N$ gate library. We present an ancilla free decomposition of MCT gates with linear phase depth and quadratic phase count. Furthermore, we provide a technique for decomposition of MCT gates in unit phase depth using the Clifford+$Z_N$ library, albeit at the cost of ancillary lines and quadratic phase count. 


\end{abstract}

\pacs{Valid PACS appear here}
\keywords{\em Clifford+$Z_N$, Quantum Computing, Multi-control Toffoli gate, Decomposition}
\maketitle

\section{Introduction}
\noindent 
Quantum computing brings forth the promise of quantum supremacy, which states that for certain classes of problems the efficiency of quantum computer cannot ever be matched by a classical computer~\cite{Shor,Grover}.
However, practical realization of quantum computer is  immensely  challenging as  quantum  circuit does  not  allow  fan-out  and  direct  feedback  due  to  limitation posed by the  no-cloning  theorem~\cite{Buzek}. Besides, the physical hardware realizing quantum computing is highly susceptible towards external noise, which corrupts the information being processed~\cite{Ladd}.  

Quantum   states  are fragile  with  low  coherence  time that makes it difficult to keep the coherent superposition of quantum information state~(qubit) intact~\cite{Steane}. This drives the need for fault tolerant circuits. Fault tolerant quantum circuit can be realized by encoding quantum states using quantum error correction code~(QECC) of which surface code is the most promising one. The key objective of QECC is to prevent   propagation  of  errors  within  the  encoded blocks. This permits each small error  to be dealt independently with block-wise separation  in order to  hold  the  gate  error-rate  within the  threshold  limits~\cite{Paetznic}. To accomplish this, the universal set of Clifford+T-group of gates consisting of H-gate, CNOT-gate, phase~(S-gate) and non-Clifford phase gate~(T-gate), is used. 
Surface code along with Clifford+T-group, has become the de-facto technology in the field of design and implementation of fault-tolerant quantum circuits~\cite{Cody_Jones_2012}.

The $n$-qubit Multi-Controlled Toffoli~(MCT) gate plays pivotal role in the implementation of quantum algorithms~\cite{Miller}.  Barrenco~\textit{et al.} presented an ancilla-free technique for the decomposition of the MCT-gate into set of roots of Pauli's~\textit{X} in exponential depth using~\cite[Lemma~7.1]{Barrenco}, {with help of ancilla inputs} in polynomial depth~\cite[Lemma-~7.3]{Barrenco} and also in quadratic depth~\cite[Lemma~7.6]{Barrenco}.  Saeedi~\textit{et al.} has presented a linear depth
ancilla-free decomposition of MCT gate using rotation-based gate~(\textbf{$R_\theta$})~\cite{Saeedi}. In fact, the decomposition approaches 	presented in the literature so far are unable to protect threshold limit of QECC due to lack of transversal gates in the design approach. However, transversal gate is necessary to achieve large scale fault-tolerant~\cite{Eastin} as, the threshold limit of most promising surface code is only 0.57\% per physical qubit and 0.75\% per operation~\cite{Fowler-1}.
\begin{table}[t]
\circspace
\centering
\caption{\em Elementary quantum gates and their operations.}
\setlength\extrarowheight{4pt}
\begin{tabular}{ rccp{3.5cm}  } \hline
	
 \multicolumn{1}{c}{\textbf{Gate}} & 
 \textbf{Symbol}  & 
 \multicolumn{1}{c}{\textbf{Matrix}}
   & \multicolumn{1}{c}{\textbf{Operations} }\\ \hline
 NOT & \Qcircuit @C=.7em @R=.7em {
& \targ & \qw
} & \begin{adjustbox}{valign=t}$\begin{bmatrix}
    0       &  1 \\
    1       &  0 
     \end{bmatrix}$ \end{adjustbox} & X$\ket{0}=\ket{1}$ \newline X$\ket{1}=\ket{0}$ \\
 CNOT & \Qcircuit @C=.7em @R=.5em {
&  \ctrl{1} & \qw\\ & \targ &  \qw \\} & 
   \begin{adjustbox}{valign=t}
    $\begin{bmatrix}
    1 & 0 & 0 & 0 \\
    0 & 1 & 0 & 0 \\
    0 & 0 & 0 & 1 \\
    0 & 0 & 1 & 0 \\
     \end{bmatrix}$ \end{adjustbox}
     &

    $\ket{00} \rightarrow \ket{00}$ \newline 
    $\ket{01} \rightarrow \ket{01}$  \newline
    $\ket{10} \rightarrow \ket{11}$ \newline
    $\ket{11} \rightarrow \ket{10}$  \\
  $Z$ gate & \Qcircuit @C=.7em @R=.7em {
& \gate{Z} & \qw
} &
\begin{adjustbox}{valign=t}$
\begin{bmatrix}
    0       & 0 \\
    0       & -1 \\
     \end{bmatrix}$ 
     \end{adjustbox} & X$\ket{0}=\ket{0}$ \newline X$\ket{1}=-\ket{1}$ \\
S gate & \Qcircuit @C=.7em @R=.5em {& \gate{S} & \qw} & $\begin{bmatrix} 1 & 0 \\ 0 & i \\
\end{bmatrix}$ & S$\ket{0} = \ket{0}$ \newline S$\ket{1} = e^\frac{i \pi}{2}\ket{1}$ \\

     $ S^\dag $gate & \Qcircuit @C=.7em @R=.5em {& \gate{S^\dag} & \qw} 
     & \begin{adjustbox}{valign=t} 
     $\begin{bmatrix} 
     1 & 0 \\ 
     0 & e^\frac{-i \pi}{2} 
\end{bmatrix}$\end{adjustbox} &
$S^\dag \ket{0} =\ket{0}$ \newline $ S^\dag \ket{1} = e^\frac{-i \pi}{2}\ket{1}$ \\
  
$T$ gate & \Qcircuit @C=.7em @R=.5em {& \gate{T} & \qw} & 
\begin{adjustbox}{valign=t}
$\begin{bmatrix} 1 & 0 \\ 0 & e^\frac{i \pi}{4} \\
\end{bmatrix}$ \end{adjustbox}
& $T\ket{0} =\ket{0}$  \newline $T\ket{1} = e^\frac{i \pi}{4}\ket{1}$ \\
 $ T^\dag $gate & \Qcircuit @C=.7em @R=.5em {& \gate{T^\dag} & \qw} & 
 \begin{adjustbox}{valign=t}
 $\begin{bmatrix} 1 & 0 \\ 0 & e^\frac{-i \pi}{4} \\
\end{bmatrix}$ \end{adjustbox}& $T^\dag \ket{0} =\ket{0}$  \newline $ T^\dag \ket{1} = e^\frac{-i \pi}{4}\ket{1}$ \\
 
$Z_N$ gate & \Qcircuit @C=.7em @R=.5em {& \gate{Z_N} & \qw} & 
\begin{adjustbox}{valign=t}
$\begin{bmatrix} 1 & 0 \\ 0 & e^\frac{i \pi}{N} \\
\end{bmatrix}$ \end{adjustbox} & $Z_N \ket{0} =\ket{1}$  \newline $ Z_N \ket{1} = e^\frac{i \pi}{N}\ket{1}$ \\
 $Z_N^\dag$ gate & \Qcircuit @C=.7em @R=.5em {& \gate{Z_N^\dag} & \qw} &
 \begin{adjustbox}{valign=t}
 $\begin{bmatrix} 1 & 0 \\ 0 & e^\frac{-i \pi}{N} \\
\end{bmatrix}$ \end{adjustbox} & $Z_N^\dag \ket{0} =\ket{0}$  \newline $ Z_N^\dag \ket{1} = e^\frac{-i \pi}{N}\ket{1}$ \\

 Hadamard(H) & \Qcircuit @C=.7em @R=.5em {
& \gate{H} & \qw
} & 
\begin{adjustbox}{valign=t}
$ \frac{1}{\sqrt{2}} \begin{bmatrix} 1 & 1 \\ 1 & -1 \\
\end{bmatrix}$ \end{adjustbox}& H $\ket{0} = \frac{1}{\sqrt 2}(\ket{0}+ \ket{1})$  \newline H $\ket{1} = \frac{1}{\sqrt 2}(\ket{0}- \ket{1})$ \\ \hline 
\end{tabular}
\label{table-1}
\end{table}
Decomposition of Toffoli gate(s) with more than two control bit into the  Clifford+$T$-group without ancilla is challenging. This is one of the key challenges that we address in this paper. To address this problem, the existing universal transversal gate set, namely the Clifford+$T$, should be re-defined that would allow decomposition of $n$-MCT gate without any ancillary line.

Low qubit coherence time, {high latency of phase gates} and high gate error rate limits the depth of the circuit especially in phase-depth of the quantum circuit~\cite{Martinis}. Shallow quantum circuits plays an important role in the implementation of larger quantum algorithms~\cite{Bravyi}. This makes the decomposition of $n$-MCT into a constant phase-depth based quantum circuit crucial. This is the second challenge we address in this paper. 

The key contributions of this paper are as follows. 
\begin{itemize}
	\item {A new universal quantum gate library~(Clifford+$Z_N$) is constituted by inclusion of $N^{th}$ root of Pauli's \textit{Z}-gate into existing popular Clifford+$T$-group.}
	\item A linear phase-depth ancilla-free decomposition technique of $n$-MCT gate has been developed using the proposed Clifford+$Z_N$ universal quantum gate library.
	\item A constant unit phase depth decomposition of $n$-MCT using additional ancillary input lines has been developed.
\end{itemize}

The rest of the paper is organized as follows. In section~\ref{sec:prelim}, the preliminaries related to quantum gates are presented. Section~\ref{sec:method} presents our proposed solution to decompose $n$-MCT with linear phase count. 
Section~\ref{sec:unitph} presents the decomposition method to achieve unit phase depth realization of $n$-MCT gate, followed by a brief summary of existing results in Section~\ref{sec:exp}. Section~\ref{sec:conc} briefly concludes the paper.

\section{Preliminaries}\label{sec:prelim}
\noindent In this section, we present the preliminaries of quantum gates and their associated properties. Table.~\ref{table-1} presents some important primitive quantum gates.

\begin{definition}[n-MCT gate]
	A $n$-MCT gate consists of $n-1$ control lines~[$c_1, c_2, \ldots ,c_{n-1}]$ and a single target line~$t$. The state of the target line changes based on the value of the control inputs.
	\begin{equation}\label{eq:targ-MCT}
	t = (c_1.c_2\ldots c_{n-1}) \oplus c_n
	\end{equation}
\end{definition}

\begin{definition}[n-MCZ gate]
	A $n$-MCZ gate is a conditional sign gate that changes the sign of the target line~$t$ with respect to the state of the all control lines and shifts the phase of the quantum states by an angle of $\pi$-degrees in Z-direction when all the $n-1$ control bits are $1$, as described in equation~\eqref{eq:nmcz}.\\
	\begin{equation} \label{eq:nmcz}
	t=(-1)^{c_1.c_2 \ldots c_{n-1}.c_{n}}
	\end{equation}
\end{definition}
In general, we denote a gate~$G$ with $k$-control lines as $C^k(G)$. For example, a two control $Z$ gate would be denoted as $C^2(Z)$. 

\begin{definition}[Pauli-gate] The Pauli-gates are a set of three 2x2 complex matrices which are Hermitian and unitary and takes into account the interaction of the spin of a particle with an external electromagnetic field. \\
		$X$, $Y$, and $Z$ represents Pauli's operator in $x$-axis, $y$-axis and $z$-axis. The $I$ represents Identity matrices.\\ 
	\begin{equation}
	X.X=Y.Y=Z.Z=I ; XYZi=I
	\end{equation}
\end{definition}

\begin{definition}[Transversal gate] Transversal gates operate bit-wise within encoded block of QECC and can interact with the corresponding qubit, either in another block or in a specialized ancilla. \end{definition}

\begin{definition}[Clifford group] The Clifford group is a set of special kind of quantum gates (G) which satisfies 
\begin{equation}
G^{\dag}PG \in P^{(\oplus)n}
\end{equation}
where $P$ represents Pauli-gate and  $P \in \{I, X, Y, Z\}$. \dag represents the self inverse of a gate.  The Clifford group is composed with \textit{H} gate, Pauli's matrices $\{X, Y, Z \}$ and \textit{S} gate along with \textit{CNOT} gate.
\end{definition}
Each logical gate in Clifford group~(NOT, CNOT, H, Z, X, Y and S) is transversal. Transversal operators do not propagate errors between the lines within the same encoded block of QECC. Any quantum circuit built over only transversal gates are inherently fault tolerant\cite{Gottesman}.

\begin{definition}[] 
	$Z_N$ defines a non-Clifford quantum operator, that represents $N_{th}$ root of Pauli-\textit{Z} gate, i.e., $Z_N = \sqrt[N]{Z}$, where $N=2^n$ and $n \in \{1,2,\ldots,\}$.
\end{definition}

For fault tolerant quantum computation, a pure magical $Z_N$-state has to be generated. The concept of magic state distillation is related to the subject of noise. Magic state distillation yields a pure quantum state with the required fidelity. It is stated that each distillation level requires $(4N-1)$ number of input states to produce a single pure magical output state~\cite{Mishra, Landahl}. 
Roots of the interrelated Pauli matrices have inherent unitary properties as well as ability to make the circuit fault tolerant one. This makes the roots of
interrelated Pauli matrices in the design of quantum computer. The inter relation between the root of the Pauli-\textit{X} and Pauli-\textit{Z} is given by following equation.
\begin{equation}\label{eq:root}
\sqrt[K]{X}= H.{\sqrt[K]{Z}}.H
\end{equation} 

The functionality of both $T$-gate and $Z_N$ gate are analogous. These gates are used to flip the phase of the qubit in $Z$-direction. On application of $T$-gate, the phase of the qubit flips by an angle of $\frac{\pi}{4}$; whereas $Z_N$-gate flips the phase by an angle $\frac{\pi}{N}$. In particular for N=4, $Z_N=T$. Therefore, we  estimate the performance of the quantum circuit in terms of both phase-count and phase-depth~(comparable to T-count and  T-depth), as non-Clifford phase gate ($Z_N$) has high distillation cost and high execution time~\cite{Fowler-Stephens, Amy-Maslov}. 
\begin{definition}[Phase-count] The minimum number of unitary phase gates ($Z_N$)used in the design of quantum circuit is defined as the phase count.
	\end{definition}
\begin{definition}[Phase-depth] Phase depth is the minimum number of cycles required to execute all the $Z_N$ gates. 
	\end{definition}

\section{Proposed Methodology}\label{sec:method}
\noindent This section provides a new technique for decomposition of $n$-MCT using the Clifford+$Z_N$-group, introduced in the next subsection.
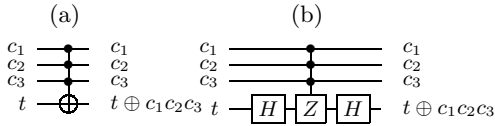
\begin{figure}[t]
	\circspace
	\begin{subfigure}[t]{0.3\columnwidth}
		\caption{}
		\label{subfig:4MCT}

			\scalebox{0.9}{
				\Qcircuit  @C= 0.5em @R=.5em {
					\lstick{c_1} & \qw & \ctrl{3} & \qw & \rstick{c_1} \\
					\lstick{c_2} & \qw  & \ctrl{0} & \qw & \rstick{c_2} \\
					\lstick{c_3} & \qw & \ctrl{0} & \qw & \rstick{c_3}\\
					\lstick{t} & \qw & \targ & \qw & \rstick{t\oplus {{c_1}{c_2}{c_3}}}}
			}

	\end{subfigure}
	\begin{subfigure}[t]{0.42\columnwidth}
		\caption{ }
		\label{Subfig:4MCT-MCZ}
		\scalebox{0.9}{
			\Qcircuit  @C= 0.5em @R=.5em {
				\lstick{c_1} & \qw  & \qw & \ctrl{3}  & \qw & \qw & \rstick{c_1} \\
				\lstick{c_2} & \qw  & \qw & \ctrl{0}  & \qw & \qw & \rstick{c_2} \\
				\lstick{c_3} & \qw  & \qw & \ctrl{0}  & \qw & \qw & \rstick{c_3} \\
				\lstick{t} & \qw & \gate{H} & \gate{Z} & \gate{H} & \qw & \rstick{t\oplus {{c_1}{c_2}{c_3}}}}
		}
	\end{subfigure}
	\caption{\em (\subref{subfig:4MCT})~4-MCT gate.
		(\subref{Subfig:4MCT-MCZ})~Decomposition of $4$-MCT using $4$-MCZ gate and $H$-gates }
	\label{fig:4mct}
\end{figure}

\begin{figure}[t]
	\circspace
	\centerline{
		\Qcircuit  @C= 0.3em @R=.5em {
			& & & & & & & & & & \lstick{c_1} &\qw & \qw & \ctrl{3} & \qw & \ctrl{2} & \qw & \ctrl{2} & \qw & \qw & \rstick{c_1} \\
			& & & & & & & & & & \lstick{c_2} & \qw & \qw & \ctrl{0} & \qw & \ctrl{0} & \qw & \ctrl{0} & \qw & \qw & \rstick{c_2} \\
			& & & & & & & & & & \lstick{c_3} &  \qw & \qw & \qw & \ctrl{1} & \targ & \ctrl{1} & \targ & \qw & \qw & \rstick{c_3} \\
			& & & & & & & & & & \lstick{t} & \qw & \gate{H} & \gate{Z_2} & \gate{Z_2} & \qw & \gate{Z^\dag_2}& \gate{H} & \qw & \qw & \qw & \qw & \rstick{t\oplus {{c_1}{c_2}{c_3}}} \\
	}}
	\caption{\em Circuit obtained by the decomposition of $C^3(Z)$ gate in Fig.~\ref{fig:4mct} into $C^2(Z_2)$ and Toffoli~gates.} 	
	\label{fig:decomposed}
	\circspace
	\centerline{
		\scalebox{0.8}{
			\Qcircuit  @C= 0.05em @R=.4em {
				& \qw & \qw &\qw & \qw & \ctrl{3} & \qw & \qw & \qw & \ctrl{2} & \qw & \ctrl{1} & \qw & \qw & \qw & \qw & \qw & \qw & \ctrl{1} & \qw & \ctrl{2} & \qw & \qw & \qw\\
				& \qw & \qw &\qw & \qw & \qw & \ctrl{2} & \qw & \qw & \qw & \ctrl{1} & \targ & \ctrl{1} & \qw & \ctrl{2} & \qw & \qw & \ctrl{1} & \targ & \ctrl{1} & \qw & \qw & \qw & \qw\\
				& \qw & \qw &\qw & \qw & \qw & \qw & \ctrl{1} & \gate{H} & \gate{Z^\dag_2} & \gate{Z^\dag_2} & \qw & \gate{Z_2}& \gate{H} & \qw & \ctrl{1} & \gate{H} & \gate{Z^\dag_2} & \qw & \gate{Z_2} & \gate{Z_2} & \gate{H} & \qw & \qw\\
				& \qw & \qw &\qw & \gate{H} & \gate{Z_4} & \gate{Z_4} & \gate{Z_2} & \qw & \qw & \qw & \qw & \qw & \qw & \gate{Z^\dag_4} & \gate{Z^\dag_2}& \gate{H} & \qw & \qw & \qw & \qw & \qw & \qw & \qw \\
	}}}
	\caption{\em 
		Circuit obtained by the decomposition of $C^2(Z)$ and Toffoli gates in Fig.~\ref{fig:decomposed} into $C^1(Z_4)$ and CNOT~gates.}
	\label{fig:eqcirc}
\end{figure}

\begin{figure*}[t]
	\circspace
	\begin{center}
		\centerline{
			\Qcircuit  @C= 0.5em @R=.1em {
				& & & & \qw & \gate{Z_8} & \targ & \gate{Z^\dag_8} & \targ & \qw & \targ & \gate{Z_4} & \ctrl{1} & \targ & \qw & \qw & \qw & \qw & \qw & \qw & \qw & \qw &\qw & \qw & \targ & \gate{Z^\dag_4} & \ctrl{1} & \targ & \qw & \qw & \qw & \qw 
				& \qw \\
				& & & & \qw & \gate{Z^\dag_8} & \targ & \gate{Z^\dag_8} & \targ & \qw & \targ & \gate{Z_4} & \targ & \qw & \ctrl{1} & \gate{Z^\dag_8} & \ctrl{1} & \qw & \targ & \qw & \gate{Z_8} & \qw & \targ & \qw & \targ & \gate{Z_4} & \targ & \qw & \ctrl{1} & \gate{Z_4} & \ctrl{1} & \qw & \qw\\
				& & & & \qw & \gate{Z_4} & \targ & \gate{Z^\dag_4} & \targ & \gate{H} & \ctrl{-2} & \gate{Z^\dag_4} & \qw & \ctrl{-2} & \targ & \gate{Z^\dag_4} & \targ & \gate{H} & \qw & \ctrl{1} & \gate{Z^\dag_4} & \ctrl{1} & \qw & \gate{H} & \ctrl{-2} & \gate{Z_4} & \qw & \ctrl{-2} & \targ & \gate{Z^\dag_4} & \targ & \gate{H} & \qw\\
				& & & & \qw & \gate{H} & \ctrl{-3} & \gate{Z_8} & \ctrl{-3} & \qw & \qw & \qw & \qw & \qw & \qw & \qw & \qw & \qw & \ctrl{-2} & \targ & \gate{Z_4} & \targ & \ctrl{-2} & \gate{H} & \qw  & \qw  & \qw  & \qw  & \qw  & \qw  & \qw  & \qw & \qw}}
	\end{center}
	\caption{\em Fault tolerant circuit for $4$-MCT gate, using the proposed Clifford+$Z_N$ group.}
	\label{fig:prop4mct}
\end{figure*}
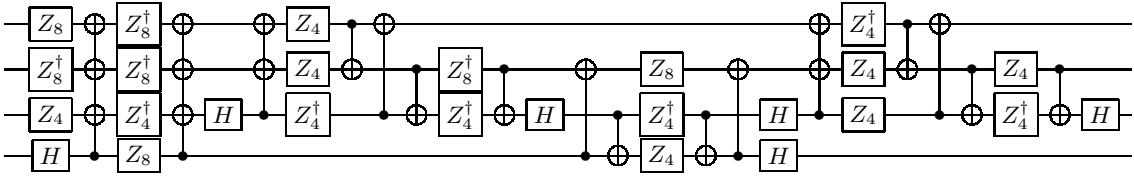

\subsection{Delineation of Clifford+Z\textsubscript{N}-group} \label{Del:ZN}
We introduce a super set of the fault-tolerant Clifford+$T$-group, defined as the Clifford+$Z_N$ group.

\begin{definition}[Clifford+$Z_N$ group]
The Clifford+$Z_n$ group is a fault tolerant quantum gate library consisting of Hadamard gate~H, phase gate~S, controlled not~CNOT, T-gate and $Z_N$-gate, where  $Z_N$ represents $N^{th}$ root of Pauli's $Z$-gate.
\end{definition}
The Clifford+$Z_N$ group is identical to the Clifford+T group for $N=4$. We envision the Clifford+$Z_N$ as the defacto fault tolerant quantum gate library, since this library offers the possibility of ancilla-free fault-tolerant decomposition of $n$-MCT. In the next subsection, we present the decomposition of a 4-MCT gate using the Clifford+$Z_n$ group.


\subsection{Decomposition of a $4$-MCT using Clifford+$Z_n$ group}
Before presenting the general method to decompose \mbox{$n$-MCT} gate, we demonstrate the decomposition of $4$-MCT gate, presented in Fig.~\ref{fig:4mct}, into fault-tolerant structure using  the newly proposed Clifford+$Z_N$ group.  The entire procedure comprises of four steps.
\begin{itemize}
    \item[Step 1.] The $4$-MCT is decomposed into a pair of $H$ gates and $4$-MCZ gate, as shown in Fig.~\ref{Subfig:4MCT-MCZ}~\cite{Barrenco}. The result of this decomposition results in a multi-control $Z$-gate, which can be further decomposed into  root of $Z$-gates with lower number of controls. 
    \item[Step 2.] The $4$-MCZ gate obtained at the end of step~1 has three control lines.
    The three control lines are split into two different set of control line --- the control line~($c_3$) represents one set and remaining two control lines~($c_1,c_2$) belong to a separate set. 
    Now, $4$-MCZ gate can be decomposed into the second root of $Z$-gate; where $Z = S^2 = Z^2_2$. The resulting decomposition structure is presented in Fig.~\ref{fig:decomposed}.
    \item[Step 3.] Fig.~\ref{fig:decomposed} is composed of  $C^2(Z_2)$ gates, Toffoli gates, alongside  one qubit and two qubit quantum gates. The $C^2(Z_2)$  and Toffoli gate are decomposed further using  step~1 and 2, recursively till the circuit comprises of 1 or 2 qubit quantum phase gates, H-gate and CNOT-gates only. The resulting circuit after decomposition of Fig.~\ref{fig:decomposed} is shown in Fig.~\ref{fig:eqcirc}.
    \item[Step 4.]The decomposed circuit in Fig.~\ref{fig:decomposed} consists of single control phase gates. To achieve fault tolerance, the single control phase gates must be decomposed into control-free phase gates. Each 2-qubit phase gate is therefore decomposed using transversal CNOT-gate and control-free phase gate using the technique shown in Fig.~\ref{fig:thm1}. 
\end{itemize}
Subsequently, two optimization rules are applied to the circuit in order to achieve low phase-count and phase-depth. Two adjacent gates which are self inverse to each other, are removed as their net effect is identity. The phase gates are free to be switched over control line whereas there is conditional restriction for switching over target line. On a target line, the phase-gate can not be switched across any single CNOT-gate but, can switch across two successive CNOT gates with common control input.
This approach is used to increase the parallelism of phase gates are carried  to lower phase depth. The final fault tolerant decomposition of $4$-MCT is presented in Fig~\ref{fig:prop4mct}.

The decomposed fault-tolerant circuit corresponding to $4$-MCT has a phase count of 20 with phase depth of 7. In the next section, we present the fault-tolerant decomposition of $n$-MCT using the Clifford+$Z_N$ group, for any $n \ge 4$.

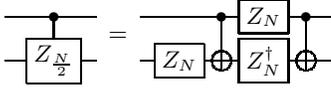
\begin{figure}[t]
	\circspace
	\centerline{
	\Qcircuit @C=0.3em @R=0.1em {
	& \qw & \qw & \ctrl{2} & \qw & \qw & & & & & & & \qw & \qw & \ctrl{2} & \gate{Z_N} & \ctrl{2} & \qw & \qw  \\
	& &  &  &  &  & & & = & & & & & & &  &  &  & \\
	& \qw & \qw & \gate{Z_{\frac{N}{2}}} & \qw & \qw & & & & & & &\qw & \gate{Z_N} & \targ & \gate{Z^\dag_N} & \targ & \qw & \qw
	}
	}
	\caption{\em Decomposition of controlled phase gate into equivalent fault tolerant circuit~\cite{Soeken}.}
	\label{fig:thm1}
\end{figure}


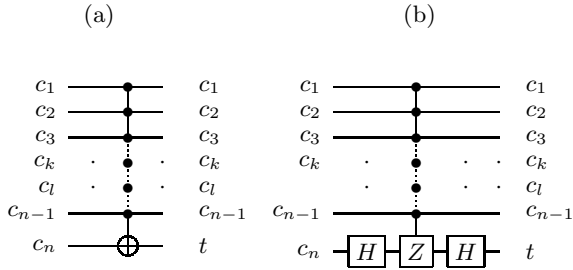
\begin{figure}[t]
	\circspace
	\begin{subfigure}[t]{0.44\columnwidth}
		\centering
		\caption{}
		\label{subfig:n-MCT}
		
		\circspace
		\centerline{
			\Qcircuit @C=1.0em @R=.8em {
				& & \lstick{c_1} & \qw & \ctrl{2} & \qw & \rstick{c_1} \\
				& & \lstick{c_2} & \qw & \ctrl{0} & \qw & \rstick{c_2} \\
				& & \lstick{c_3} & \qw & \ctrl{0} \ar@{.}[ddd] & \qw & \rstick{c_3} \\
				& & \lstick{c_k} & \cdot & \control & \cdot & \rstick{c_k} \\
				& & \lstick{c_l} & \cdot & \control & \cdot & \rstick{c_l} \\
				& & \lstick{c_{n-1}} & \qw & \ctrl{1} & \qw & \rstick{c_{n-1}} \\
				& & \lstick{c_n} & \qw & \targ & \qw & \rstick{t} 
		}}
	\end{subfigure}
	\begin{subfigure}[t]{0.52\columnwidth}
		\centering
		\caption{}
		\label{subfig:n-MCZ}
		\circspace	\centerline{
			\Qcircuit @C=0.6em @R=.8em {
				\lstick{c_1} & \qw & \ctrl{2} & \qw & \qw & \rstick{c_1}\\
				\lstick{c_2} & \qw & \ctrl{0} & \qw & \qw & \rstick{c_2} \\
				\lstick{c_3} & \qw & \ctrl{0} \ar@{.}[ddd]  & \qw & \qw & \rstick{c_3}\\
				\lstick{c_k} & \cdot & \control & \cdot & \cdot & \rstick{c_k} \\
				& \cdot & \control & \cdot & \cdot & \rstick{c_l} \\
				\lstick{c_{n-1}} & \qw & \ctrl{1} & \qw & \qw & \rstick{c_{n-1}}\\
				\lstick{c_n} & \gate{H} & \gate{Z} & \gate{H} & \qw & \rstick{t}
		}}
	\end{subfigure}
	\caption{\em (\subref{subfig:n-MCT}) $n$-MCT gate. 
		(\subref{subfig:n-MCZ}) Decomposition of $n$-MCT gate using $n$-MCZ gate and two H gates.}
	\label{fig:nmct}
\end{figure}
\subsection{Decomposition of $n$-MCT using Clifford+$Z_N$ group}\label{Dec:MCT}
For any arbitrary $n$~($n\ge 4$), we present a method for decomposition $n$-MCT gate using Clifford+$Z_N$-group. 
In the first step, the $n$-MCT gate is translated into the form of Pauli's $Z$-gate through Jacobi method which needs a pair of $H$-gates, as shown in Fig.~\ref{subfig:n-MCZ}~\cite{Barrenco}. 
The $n$-MCZ gate is decomposed into a set of single control phase gates and $k$-MCT gates, where $k <n$, as shown in Fig.~\ref{fig:lowermct}. These two steps are recursively applied to the
present $k$-MCT gates~($k>2$) to decompose them into a set of  single control phase gates and CNOT gates. The result of the decomposition is depicted in Fig.~\ref{fig:intermidiate}. In the final step, each single control phase gate is replaced by the equivalent fault tolerant circuit, using the method shown in Fig.~\ref{fig:thm1}. The final decomposed circuit is shown in Fig.~\ref{fig:decompnMCT}.

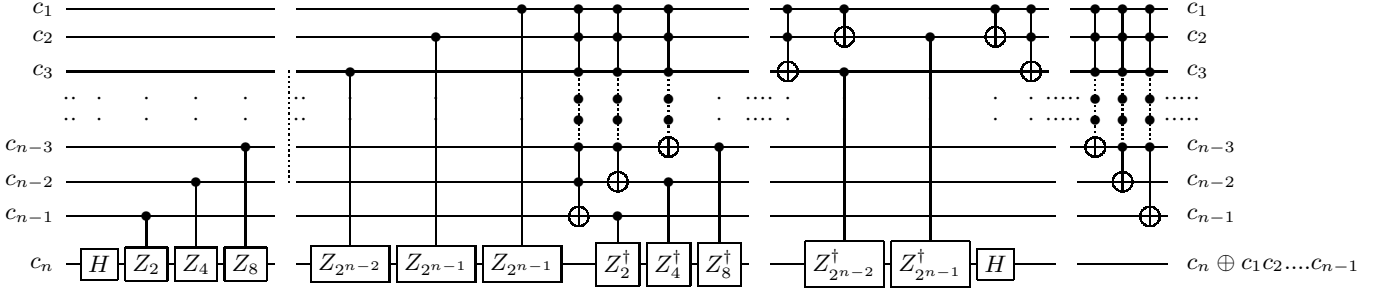
\begin{figure*}[ht]
   \circspace
   \centerline{
   \Qcircuit @C=0.3em @R=0.6em {
   \lstick{c_1} & \qw & \qw & \qw & \qw & \qw & \qw & & & &\qw & \qw & \qw & \ctrl{8} & \ctrl{2} & \ctrl{2} & \ctrl{2} & \qw & \qw & & & & \ctrl{2} & \ctrl{1} & \qw & \ctrl{1} & \ctrl{2} & \qw & & & & \qw & \ctrl{2} & \ctrl{2} & \ctrl{2} & \qw & \rstick{c_1}\\
   \lstick{c_2} & \qw & \qw & \qw & \qw & \qw & \qw & & & &\qw & \qw & \ctrl{7} & \qw & \ctrl{0}  & \ctrl{0} & \ctrl{0}  & \qw & \qw & & & & \ctrl{0} & \targ & \ctrl{7}  & \targ & \ctrl{0} & \qw & & & & \qw & \ctrl{0} & \ctrl{0} & \ctrl{0} & \qw & \rstick{c_2}\\
   \lstick{c_3} & \qw & \qw & \qw & \qw & \qw & \qw & & \ar@{.}[dddd] & & \qw & \ctrl{6} & \qw & \qw & \ctrl{0} \ar@{.}[ddd]& \ctrl{0}\ar@{.}[ddd] & \ctrl{0}\ar@{.}[ddd] & \qw & \qw & & & & \targ & \ctrl{6} & \qw & \qw & \targ & \qw & & & & \qw & \ctrl{0} \ar@{.}[ddd] & \ctrl{0} \ar@{.}[ddd] & \ctrl{0} \ar@{.}[ddd] &\qw &\rstick{c_3}\\
   \cdot & \cdot & \cdot & \cdot & \cdot & \cdot & & & &\cdot & \cdot & \cdot & \cdot & \cdot & \control & \control & \control & \cdot & \cdot & \cdot & \cdot & \cdot & \cdot & \cdot & \cdot & \cdot & \cdot & \cdot & \cdot & \cdot & \cdot & \cdot & \control & \control & \control & \cdot & \cdot & \cdot & \cdot & \cdot \\
   \cdot & \cdot & \cdot & \cdot & \cdot & \cdot & & & & \cdot & \cdot & \cdot & \cdot & \cdot & \control & \control & \control & \cdot & \cdot & \cdot & \cdot  & \cdot & \cdot & \cdot & \cdot & \cdot & \cdot & \cdot & \cdot & \cdot & \cdot & \cdot & \control & \control & \control & \cdot & \cdot & \cdot & \cdot & \cdot\\
   \lstick{c_{n-3}} & \qw & \qw & \qw & \qw & \ctrl{3} & \qw & & & & \qw & \qw & \qw & \qw & \ctrl{2} & \ctrl{1} & \targ & \ctrl{3} & \qw & & & & \qw & \qw & \qw & \qw & \qw & \qw & \qw & & & \qw & \targ & \ctrl{1} & \ctrl{2} & \qw & \rstick{c_{n-3}}\\
   \lstick{c_{n-2}} & \qw & \qw & \qw & \ctrl{2} & \qw & \qw & & & & \qw & \qw & \qw & \qw & \ctrl{0} & \targ & \ctrl{2} & \qw & \qw & & & & \qw & \qw & \qw & \qw & \qw & \qw & \qw & & & & \qw & \targ & \qw & \qw & \rstick{c_{n-2}}\\
   \lstick{c_{n-1}} & \qw & \qw & \ctrl{1} & \qw & \qw & \qw & & & & \qw & \qw & \qw & \qw & \targ & \ctrl{1} & \qw & \qw & \qw & & & & \qw & \qw & \qw & \qw & \qw & \qw & \qw & & & & \qw & \qw & \targ & \qw & \rstick{c_{n-1}}\\
   \lstick{c_n} & \qw & \gate{H} &\gate{Z_2} & \gate{Z_4} & \gate{Z_8} & \qw & & & & \qw & \gate{Z_{2^{n-2}}} & \gate{Z_{2^{n-1}}} & \gate{Z_{2^{n-1}}} & \qw & \gate{Z^\dag_2} & \gate{Z^\dag_4} & \gate{Z^\dag_8} & \qw & & & & \qw & \gate{Z^\dag_{2^{n-2}}} & \gate{Z^\dag_{2^{n-1}}} & \gate{H} & \qw & \qw & \qw & & & & \qw & \qw & \qw & \qw & \rstick{c_n\oplus {c_1}{c_2}....{c_{n-1}}}}
   }
   \caption{\em Decomposition of $n$-MCT circuit in Fig.~\ref{subfig:n-MCZ}
   	into single control phase gates, $k$-MCT gates~($k<n$) and Hadamard gates.}
   \label{fig:lowermct}
   \end{figure*}

	  \begin{figure*}[t]
	\circspace
	\centerline{
		\Qcircuit @C=0.1em @R=0.6em {
			\lstick{c_1} & \qw & \qw & \qw & \qw & \qw & \qw & & & &\qw & \ctrl{8} & \qw & & & \qw & \qw & \qw & & & \qw & \ctrl{2} & \qw & \ctrl{1} & \qw & \qw & \qw & & & \qw & \qw & \qw & \qw & & & \qw & \qw & \ctrl{1} & \qw & \ctrl{2} &\qw & & &\qw & \qw & \qw & \qw  & \qw & \qw & \qw & \rstick{c_1}\\
			\lstick{c_2} & \qw & \qw & \qw & \qw & \qw & \qw & & & &\qw & \qw & \qw & & & \qw  & \qw & \qw & & & \qw  & \qw & \ctrl{1} & \targ & \ctrl{1}  & \qw & \qw & & & \qw & \ctrl{7} & \qw & \qw & & & \qw & \ctrl{1} & \targ & \ctrl{1} & \qw & \qw & & &\qw & \qw & \qw & \qw  & \qw & \qw & \qw & \rstick{c_2}\\
			\lstick{c_3} & \qw & \qw & \qw & \qw & \qw & \qw & & \ar@{.}[dddd] & & \qw & \qw & \qw & & \ar@{.}[dddd] & \qw & \qw & \qw & & & \gate{H} & \gate{Z^\dag_2} & \gate{Z^\dag_2} & \qw & \gate{Z_2} & \gate{H} & \qw & & & \ctrl{6} & \qw & \qw & \qw & & & \gate{H}  & \gate{Z^\dag_2}  & \qw & \gate{Z_2} & \gate{Z_2} & \gate{H} & & &\qw \ar@{.}[ddd] & \qw & \qw & \qw  & \qw & \qw & \qw &\rstick{c_3}\\
			\cdot & \cdot & \cdot & \cdot & \cdot & \cdot & & & &\cdot & \cdot & \cdot & \cdot & \cdot & \cdot & \cdot & \cdot & \cdot & \cdot & \cdot & \cdot & \cdot & \cdot & \cdot & \cdot & \cdot & \cdot & \cdot & \cdot & \cdot & \cdot & \cdot & \cdot & \cdot & \cdot & \cdot & \cdot & \cdot & \cdot  & \cdot & \cdot & \cdot & \cdot & \cdot & \cdot & \cdot & \cdot \\
			\cdot & \cdot & \cdot & \cdot & \cdot & \cdot & & & & \cdot & \cdot & \cdot & \cdot & \cdot & \cdot & \cdot & \cdot & \cdot & \cdot & \cdot & \cdot & \cdot  & \cdot & \cdot & \cdot & \cdot & \cdot & \cdot & \cdot & \cdot & \cdot & \cdot & \cdot & \cdot & \cdot & \cdot & \cdot & \cdot & \cdot & \cdot & \cdot & \cdot & \cdot & \cdot & \cdot & \cdot & \cdot \\
			\lstick{c_{n-3}} & \qw & \qw & \qw & \qw & \ctrl{3} & \qw & & & & \qw & \qw & \qw & & & \qw & \qw & \qw & & & \qw & \qw & \qw & \qw & \qw & \qw & \qw & & & \qw & \qw & \qw & \qw & & & \qw & \qw & \qw & \qw & \qw & \qw & \qw & \qw & & & \ctrl{2} &\qw & \qw & \qw & \qw & \rstick{c_{n-3}}\\
			\lstick{c_{n-2}} & \qw & \qw & \qw & \ctrl{2} & \qw & \qw & & & & \qw & \qw & \qw & & & \qw & \ctrl{1} & \qw & & & \qw & \qw & \qw & \qw & \qw & \qw & \qw & & & \qw & \qw & \qw & \qw & & & \qw & \qw & \qw & \qw & \qw & \qw & \qw & \qw & & & \qw &\ctrl{1} & \qw & \qw & \rstick{c_{n-2}}\\
			\lstick{c_{n-1}} & \qw & \qw & \ctrl{1} & \qw & \qw & \qw & & & & \qw & \qw & \qw & & & \gate{H} & \gate{Z_2} & \qw & & & \qw & \qw & \qw & \qw & \qw & \qw & \qw & & & \qw & \qw & \qw & \qw & & & \qw & \qw & \qw & \qw & \qw & \qw & \qw & \qw & & & \gate{Z^\dag_4} &\gate{Z^\dag_2} & \gate{H} & \qw & \rstick{c_{n-1}}\\
			\lstick{c_n} & \qw & \gate{H} &\gate{Z_2} & \gate{Z_4} & \gate{Z_8} & \qw & & & & \qw & \gate{Z_{2^{n-1}}} & \qw & & & \qw & \qw & \qw & \qw & & & \cdot & \cdot & \cdot & \cdot & \cdot & \cdot & & & \gate{Z^\dag_{2^{n-2}}} & \gate{Z^\dag_{2^{n-1}}} & \gate{H} & \qw & & & \qw & \qw & \qw & \qw & \qw & \qw & \qw & \qw & & & \qw & \rstick{c_n\oplus {c_1}{c_2}....{c_{n-1}}} & & & & & & & &}
	}
	\caption{\em Equivalent circuit of $n$-MCT gate consisting of single control phase gates, CNOT and Hadamard gates. }
	\label{fig:intermidiate}
\end{figure*}
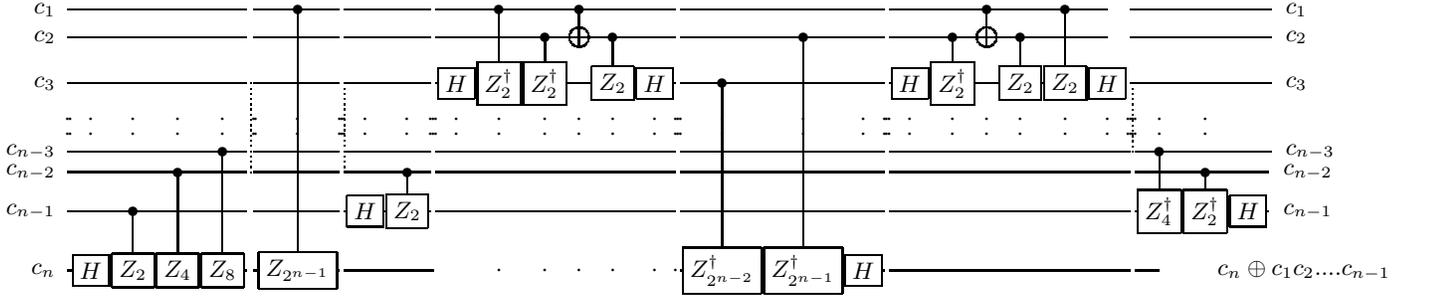

 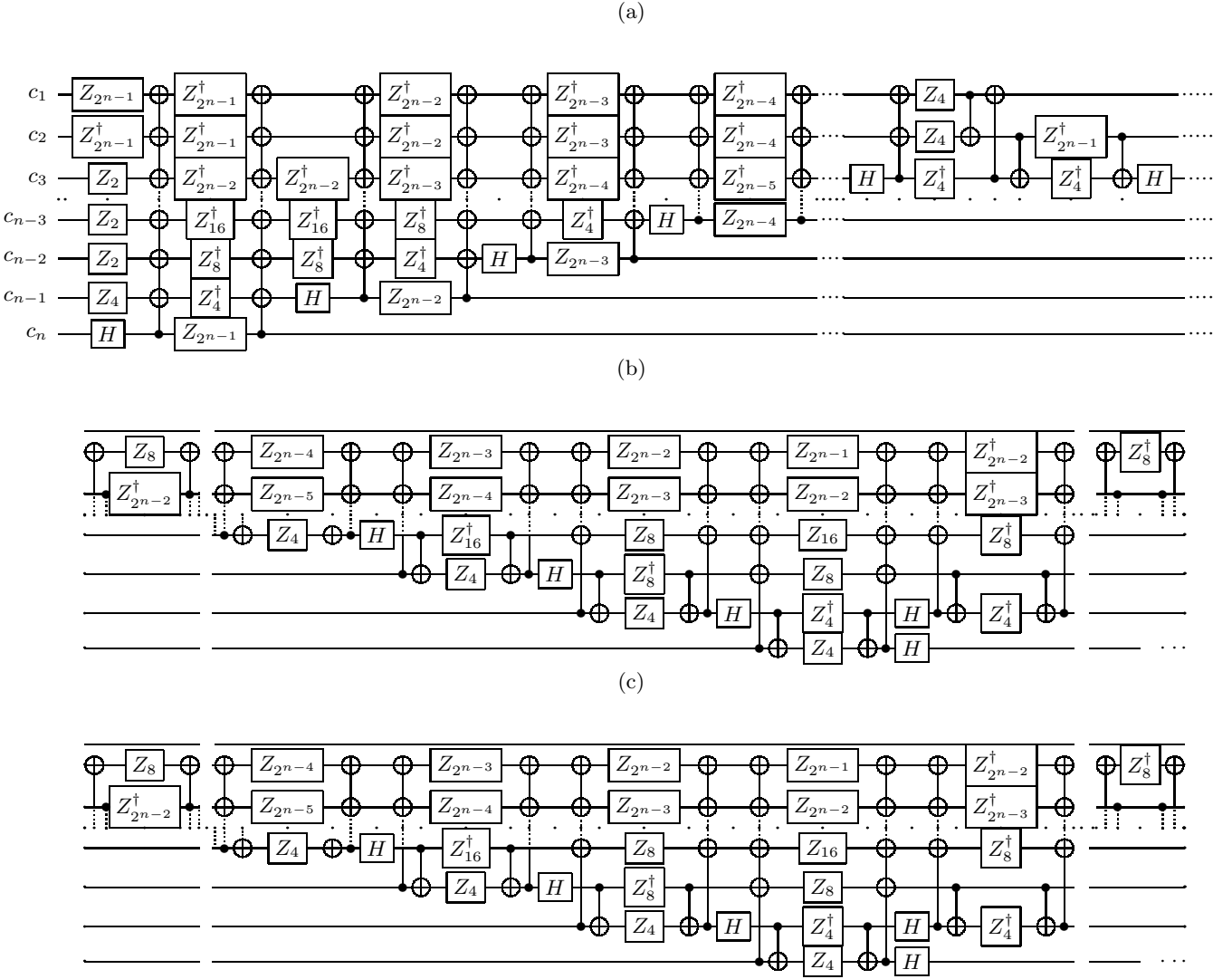
\begin{figure*}[t!]
    \centering
    \begin{subfigure}[t]{0.5\textwidth}
    	\caption{}
    	\label{subfig:C1}
     \circspace
	\centerline{
        \Qcircuit @C=0.3em @R=0.01em {
			\lstick{c_1} & \qw & \gate{Z_{2^{n-1}}} & \targ \qwx[2]& \gate{Z^\dag_{2^{n-1}}} & \targ \qwx[2] & \qw &  \targ \qwx[2] & \gate{Z^\dag_{2^{n-2}}}& \targ \qwx[2] & \qw & \targ \qwx[2] & \gate{Z^\dag_{2^{n-3}}} & \targ \qwx[2] & \qw & \targ \qwx[2] & \gate{Z^\dag_{2^{n-4}}} & \targ \qwx[2] &\qw & \cdot & \cdot & \cdot & \cdot & \qw & \targ & \gate{Z_4} & \ctrl{1} & \targ & \qw & \qw & \qw & \qw & \qw & \cdot & \cdot & \cdot & \cdot & \cdot \\
			\lstick{c_2} & \qw & \gate{Z^\dag_{2^{n-1}}}& \targ &  \gate{Z^\dag_{2^{n-1}}} & \targ & \qw & \targ & \gate{Z^\dag_{2^{n-2}}} & \targ & \qw & \targ & \gate{Z^\dag_{2^{n-3}}}  & \targ & \qw & \targ & \gate{Z^\dag_{2^{n-4}}}  & \targ & \qw & \cdot & \cdot & \cdot & \cdot & \qw & \targ & \gate{Z_4} & \targ & \qw & \ctrl{1} & \gate{Z^\dag_{2^{n-1}}} & \ctrl{1} & \qw & \qw & \cdot & \cdot & \cdot & \cdot & \cdot \\
			\lstick{c_3} & \qw & \gate{Z_2} & \targ \ar@{.}[ddd] & \gate{Z^\dag_{2^{n-2}}} & \targ \ar@{.}[ddd]& \gate{Z^\dag_{2^{n-2}}} & \targ \ar@{.}[ddd]  & \gate{Z^\dag_{2^{n-3}}}  & \targ \ar@{.}[ddd]   & \qw &\targ \ar@{.}[ddd] & \gate{Z^\dag_{2^{n-4}}} & \targ \ar@{.}[ddd] & \qw  &\targ \ar@{.}[ddd] & \gate{Z^\dag_{2^{n-5}}} & \targ \ar@{.}[ddd] & \qw & \cdot & \cdot & \cdot & \cdot & \gate{H} & \ctrl{-2} & \gate{Z^\dag_4} & \qw & \ctrl{-2} & \targ & \gate{Z^\dag_4} & \targ & \gate{H} & \qw & \qw & \cdot & \cdot & \cdot & \cdot  \\
			\cdot & \cdot & \cdot & \cdot & \cdot & \cdot & & & &\cdot & \cdot & \cdot & \cdot & \cdot & \cdot & \cdot & \cdot & \cdot & \cdot & \cdot & \cdot & \cdot & \cdot & \cdot & \cdot & \cdot & \cdot & \cdot & \cdot & \cdot & \cdot  \\
			\cdot & \cdot & \cdot & \cdot & \cdot & \cdot & & & & \cdot & \cdot & \cdot & \cdot & \cdot & \cdot & \cdot & \cdot & \cdot & \cdot & \cdot & \cdot  & \cdot & \cdot & \cdot & \cdot & \cdot & \cdot & \cdot & \cdot & \cdot & \cdot \\
			\lstick{c_{n-3}} & \qw & \gate{Z_2} & \targ & \gate{Z^\dag_{16}} & \targ & \gate{Z^\dag_{16}} & \targ & \gate{Z^\dag_8} & \targ & \qw & \targ & \gate{Z^\dag_4} & \targ & \gate{H} & \ctrl{0} & \gate{Z_{2^{n-4}}} & \ctrl{0} & \qw & \cdot & \cdot & \cdot & \cdot & \qw & \qw & \qw & \qw & \qw & \qw & \qw & \qw & \qw & \qw & \qw & \cdot & \cdot & \cdot & \cdot  \\
			\lstick{c_{n-2}} & \qw & \gate{Z_2} & \targ & \gate{Z^\dag_8} & \targ & \gate{Z^\dag_8} & \targ & \gate{Z^\dag_4} & \targ & \gate{H} & \ctrl{-1} & \gate{Z_{2^{n-3}}} & \ctrl{-1} & \qw & \qw & \qw & \qw & \qw & \cdot & \cdot & \cdot & \cdot & \qw & \qw & \qw & \qw & \qw & \qw & \qw & \qw & \qw & \qw & \qw & \cdot & \cdot & \cdot & \cdot \\
			\lstick{c_{n-1}} & \qw & \gate{Z_4} & \targ & \gate{Z^\dag_4} & \targ & \gate{H} & \ctrl{-2} & \gate{Z_{2^{n-2}}} & \ctrl{-2} & \qw & \qw & \qw & \qw & \qw & \qw & \qw & \qw & \qw & \cdot & \cdot & \cdot & \cdot & \qw & \qw & \qw & \qw & \qw & \qw & \qw & \qw & \qw & \qw& \qw & \cdot & \cdot & \cdot & \cdot \\
			\lstick{c_n} & \qw & \gate{H} & \ctrl{-3} & \gate{Z_{2^{n-1}}} & \ctrl{-3} & \qw & \qw & \qw & \qw & \qw & \qw & \qw & \qw & \qw & \qw & \qw & \qw & \qw & \cdot & \cdot & \cdot & \cdot & \qw & \qw & \qw & \qw & \qw & \qw & \qw & \qw & \qw & \qw & \qw & \cdot & \cdot & \cdot & \cdot}}

    \end{subfigure}%
    
    \begin{subfigure}[t]{0.5\textwidth}
    	\caption{}
    	\label{subfig:C2}
     \circspace
	\centerline{
       \Qcircuit @C=0.01em @R=0.01em {
	&\qw & \qw & \qw & \qw & \qw & \qw & \qw  & \qw & \push{\rule{.7em}{0em}} & \qw & \qw & \qw & \qw & \qw & \qw & \qw & \qw  & \qw & \qw & \qw & \qw & \qw & \qw & \qw & \qw & \qw  & \qw & \qw & \qw & \qw & \qw & \qw & \qw & \qw & \qw  & \qw & \qw & \qw  & \qw  & \qw  & \push{\rule{.7em}{0em}} & \qw & \qw & \qw & \qw & \qw & \qw & \qw & \qw  \\
	&\cdot & \cdot &  \targ \qwx[1] & \qw & \gate{Z_8} & \qw & \targ \qwx[1] & \qw & \push{\rule{.5em}{0em}} & \qw & \targ \qwx[1] & \qw & \gate{Z_{2^{n-4}}} & \qw & \targ \qwx[1]& \qw & \targ \qwx[1] & \qw & \gate{Z_{2^{n-3}}} & \qw & \targ \qwx[1]& \qw & \targ \qwx[1] & \qw & \gate{Z_{2^{n-2}}} & \qw & \targ \qwx[1]& \qw & \targ \qwx[1] & \qw & \gate{Z_{2^{n-1}}} & \qw & \targ \qwx[1]& \qw & \targ \qwx[1] & \qw & \gate{Z^\dag_{2^{n-2}}} & \qw & \targ \qwx[1]& \qw & & \push{\rule{.3em}{0em}}& \qw & \targ \qwx[1] & \qw & \gate{Z^\dag_8} & \qw & \targ \qwx[1] & \qw & \cdot & \cdot \\
	&\cdot & \cdot & \qw \ar@{.}[d]& \ctrl{0} \ar@{.}[d]& \gate{Z^\dag_{2^{n-2}}} & \qw & \ctrl{0} \ar@{.}[d] & \qw \ar@{.}[d] & \push{\rule{.5em}{0em}} & \qw & \targ \ar@{.}[ddd]& \qw &  \gate{Z_{2^{n-5}}} &  \qw & \targ \ar@{.}[ddd]& \qw & \targ \ar@{.}[ddd]& \qw &  \gate{Z_{2^{n-4}}} &  \qw & \targ \ar@{.}[ddd]& \qw & \targ \ar@{.}[ddd]& \qw &  \gate{Z_{2^{n-3}}} &  \qw & \targ \ar@{.}[ddd]& \qw & \targ \ar@{.}[ddd]& \qw &  \gate{Z_{2^{n-2}}} &  \qw & \targ \ar@{.}[ddd]& \qw & \targ \ar@{.}[ddd]& \qw &  \gate{Z^\dag_{2^{n-3}}} &  \qw & \targ \ar@{.}[ddd]& \qw & & \push{\rule{.3em}{0em}} & \qw & \qw \ar@{.}[d]& \ctrl{0} \ar@{.}[d]& \qw & \ctrl{0} \ar@{.}[d] & \qw \ar@{.}[d] & \qw & \cdot \\
	&\cdot & \cdot & \cdot & \cdot & \cdot & \cdot & \cdot & \cdot & \cdot & \cdot & \cdot & \cdot & \cdot & \cdot & \cdot & \cdot & \cdot & \cdot & \cdot & \cdot & \cdot & \cdot & \cdot & \cdot & \cdot & \cdot & \cdot & \cdot & \cdot & \cdot & \cdot & \cdot & \cdot & \cdot & \cdot & \cdot & \cdot & \cdot & \cdot & \cdot & \cdot & \cdot & \cdot & \cdot  & \cdot & \cdot & \cdot & \cdot & \cdot & \cdot & \cdot & \cdot & \cdot    \\
	&\cdot & \cdot & \cdot & \cdot & \cdot & \cdot & \cdot & \cdot & \cdot & \cdot\ar@{.}[d] & \cdot & \cdot \ar@{.}[d] & \cdot & \cdot & \cdot & \cdot  & \cdot & \cdot & \cdot & \cdot & \cdot & \cdot & \cdot & \cdot & \cdot & \cdot & \cdot & \cdot & \cdot & \cdot & \cdot & \cdot & \cdot & \cdot & \cdot & \cdot & \cdot & \cdot & \cdot & \cdot & \cdot & \cdot & \cdot & \cdot & \cdot & \cdot & \cdot & \cdot & \cdot & \cdot & \cdot & \cdot & \cdot    \\
	&\cdot & \cdot & \qw & \qw & \qw & \qw & \qw & \qw & \push{\rule{.5em}{0em}} & \qw & \ctrl{0} & \targ & \gate{Z_4} & \targ & \ctrl{0} & \gate{H} & \qw & \ctrl{1} & \gate{Z^\dag_{16}} & \ctrl{1} & \qw & \qw & \targ & \qw & \gate{Z_8} & \qw & \targ & \qw & \targ & \qw & \gate{Z_{16}} & \qw & \targ & \qw & \targ & \qw & \gate{Z^\dag_8} & \qw & \targ & \qw & \push{\rule{.7em}{0em}} & \qw & \qw & \qw & \qw & \qw & \qw & \qw & \qw & \qw & \qw & \cdot & \cdot & \cdot & \cdot  \\
	&\cdot & \cdot & \qw & \qw & \qw & \qw & \qw & \qw & \push{\rule{.5em}{0em}} & \qw & \qw & \qw & \qw & \qw & \qw & \qw & \ctrl{-1} & \targ & \gate{Z_4} & \targ & \ctrl{-1} & \gate{H} & \qw & \ctrl{1} & \gate{Z^\dag_{8}} & \ctrl{1} & \qw & \qw & \targ & \qw & \gate{Z_8} & \qw & \targ & \qw & \qw & \ctrl{1} & \qw & \ctrl{1} & \qw & \qw & \push{\rule{.7em}{0em}} & \qw & \qw & \qw & \qw & \qw & \qw & \qw & \qw & \qw & \qw & \qw & \cdot & \cdot & \cdot & \cdot \\
	&\cdot & \cdot & \qw & \qw & \qw & \qw & \qw & \qw & \push{\rule{.5em}{0em}} & \qw & \qw & \qw & \qw & \qw & \qw & \qw & \qw & \qw & \qw & \qw & \qw & \qw & \ctrl{-2} & \targ & \gate{Z_4} & \targ & \ctrl{-2} & \gate{H} & \qw & \ctrl{1} & \gate{Z^\dag_4} & \ctrl{1} & \qw & \gate{H} & \ctrl{-2} & \targ & \gate{Z^\dag_4} & \targ & \ctrl{-2} & \qw & \push{\rule{.7em}{0em}} & \qw & \qw & \qw & \qw & \qw & \qw & \qw & \qw & \qw& \qw & \cdot & \cdot & \cdot & \cdot \\
	&\cdot & \cdot & \qw & \qw & \qw & \qw & \qw & \qw & \push{\rule{.5em}{0em}} & \qw & \qw & \qw & \qw & \qw & \qw & \qw & \qw & \qw & \qw & \qw & \qw & \qw & \qw & \qw & \qw & \qw & \qw & \qw & \ctrl{-3} & \targ & \gate{Z_4} & \targ & \ctrl{-3} & \gate{H} & \qw & \qw & \qw & \qw & \qw & \qw & \push{\rule{.7em}{0em}} & \qw & \qw & \qw & \qw& \qw &\cdot & \cdot & \cdot }}
    \end{subfigure}
    \\
     \begin{subfigure}[t]{0.5\textwidth}
     	    	\caption{}
     	\label{subfig:C3}
    \circspace
	\centerline{
        \Qcircuit @C=0.01em @R=0.01em {
	&\qw & \qw & \qw & \qw & \qw & \qw & \qw  & \qw & \push{\rule{.7em}{0em}} & \qw & \qw & \qw & \qw & \qw & \qw & \qw & \qw  & \qw & \qw & \qw & \qw & \qw & \qw & \qw & \qw & \qw  & \qw & \qw & \qw & \qw & \qw & \qw & \qw & \qw & \qw  & \qw & \qw & \qw  & \qw  & \qw  & \push{\rule{.7em}{0em}} & \qw & \qw & \qw & \qw & \qw & \qw & \qw & \qw  \\
	&\cdot & \cdot &  \targ \qwx[1] & \qw & \gate{Z_8} & \qw & \targ \qwx[1] & \qw & \push{\rule{.5em}{0em}} & \qw & \targ \qwx[1] & \qw & \gate{Z_{2^{n-4}}} & \qw & \targ \qwx[1]& \qw & \targ \qwx[1] & \qw & \gate{Z_{2^{n-3}}} & \qw & \targ \qwx[1]& \qw & \targ \qwx[1] & \qw & \gate{Z_{2^{n-2}}} & \qw & \targ \qwx[1]& \qw & \targ \qwx[1] & \qw & \gate{Z_{2^{n-1}}} & \qw & \targ \qwx[1]& \qw & \targ \qwx[1] & \qw & \gate{Z^\dag_{2^{n-2}}} & \qw & \targ \qwx[1]& \qw & & \push{\rule{.3em}{0em}}& \qw & \targ \qwx[1] & \qw & \gate{Z^\dag_8} & \qw & \targ \qwx[1] & \qw & \cdot & \cdot \\
	&\cdot & \cdot & \qw \ar@{.}[d]& \ctrl{0} \ar@{.}[d]& \gate{Z^\dag_{2^{n-2}}} & \qw & \ctrl{0} \ar@{.}[d] & \qw \ar@{.}[d] & \push{\rule{.5em}{0em}} & \qw & \targ \ar@{.}[ddd]& \qw &  \gate{Z_{2^{n-5}}} &  \qw & \targ \ar@{.}[ddd]& \qw & \targ \ar@{.}[ddd]& \qw &  \gate{Z_{2^{n-4}}} &  \qw & \targ \ar@{.}[ddd]& \qw & \targ \ar@{.}[ddd]& \qw &  \gate{Z_{2^{n-3}}} &  \qw & \targ \ar@{.}[ddd]& \qw & \targ \ar@{.}[ddd]& \qw &  \gate{Z_{2^{n-2}}} &  \qw & \targ \ar@{.}[ddd]& \qw & \targ \ar@{.}[ddd]& \qw &  \gate{Z^\dag_{2^{n-3}}} &  \qw & \targ \ar@{.}[ddd]& \qw & & \push{\rule{.3em}{0em}} & \qw & \qw \ar@{.}[d]& \ctrl{0} \ar@{.}[d]& \qw & \ctrl{0} \ar@{.}[d] & \qw \ar@{.}[d] & \qw & \cdot \\
	&\cdot & \cdot & \cdot & \cdot & \cdot & \cdot & \cdot & \cdot & \cdot & \cdot & \cdot & \cdot & \cdot & \cdot & \cdot & \cdot & \cdot & \cdot & \cdot & \cdot & \cdot & \cdot & \cdot & \cdot & \cdot & \cdot & \cdot & \cdot & \cdot & \cdot & \cdot & \cdot & \cdot & \cdot & \cdot & \cdot & \cdot & \cdot & \cdot & \cdot & \cdot & \cdot & \cdot & \cdot  & \cdot & \cdot & \cdot & \cdot & \cdot & \cdot & \cdot & \cdot & \cdot    \\
	&\cdot & \cdot & \cdot & \cdot & \cdot & \cdot & \cdot & \cdot & \cdot & \cdot\ar@{.}[d] & \cdot & \cdot \ar@{.}[d] & \cdot & \cdot & \cdot & \cdot  & \cdot & \cdot & \cdot & \cdot & \cdot & \cdot & \cdot & \cdot & \cdot & \cdot & \cdot & \cdot & \cdot & \cdot & \cdot & \cdot & \cdot & \cdot & \cdot & \cdot & \cdot & \cdot & \cdot & \cdot & \cdot & \cdot & \cdot & \cdot & \cdot & \cdot & \cdot & \cdot & \cdot & \cdot & \cdot & \cdot & \cdot    \\
	&\cdot & \cdot & \qw & \qw & \qw & \qw & \qw & \qw & \push{\rule{.5em}{0em}} & \qw & \ctrl{0} & \targ & \gate{Z_4} & \targ & \ctrl{0} & \gate{H} & \qw & \ctrl{1} & \gate{Z^\dag_{16}} & \ctrl{1} & \qw & \qw & \targ & \qw & \gate{Z_8} & \qw & \targ & \qw & \targ & \qw & \gate{Z_{16}} & \qw & \targ & \qw & \targ & \qw & \gate{Z^\dag_8} & \qw & \targ & \qw & \push{\rule{.7em}{0em}} & \qw & \qw & \qw & \qw & \qw & \qw & \qw & \qw & \qw & \qw & \cdot & \cdot & \cdot & \cdot  \\
	&\cdot & \cdot & \qw & \qw & \qw & \qw & \qw & \qw & \push{\rule{.5em}{0em}} & \qw & \qw & \qw & \qw & \qw & \qw & \qw & \ctrl{-1} & \targ & \gate{Z_4} & \targ & \ctrl{-1} & \gate{H} & \qw & \ctrl{1} & \gate{Z^\dag_{8}} & \ctrl{1} & \qw & \qw & \targ & \qw & \gate{Z_8} & \qw & \targ & \qw & \qw & \ctrl{1} & \qw & \ctrl{1} & \qw & \qw & \push{\rule{.7em}{0em}} & \qw & \qw & \qw & \qw & \qw & \qw & \qw & \qw & \qw & \qw & \qw & \cdot & \cdot & \cdot & \cdot \\
	&\cdot & \cdot & \qw & \qw & \qw & \qw & \qw & \qw & \push{\rule{.5em}{0em}} & \qw & \qw & \qw & \qw & \qw & \qw & \qw & \qw & \qw & \qw & \qw & \qw & \qw & \ctrl{-2} & \targ & \gate{Z_4} & \targ & \ctrl{-2} & \gate{H} & \qw & \ctrl{1} & \gate{Z^\dag_4} & \ctrl{1} & \qw & \gate{H} & \ctrl{-2} & \targ & \gate{Z^\dag_4} & \targ & \ctrl{-2} & \qw & \push{\rule{.7em}{0em}} & \qw & \qw & \qw & \qw & \qw & \qw & \qw & \qw & \qw& \qw & \cdot & \cdot & \cdot & \cdot \\
	&\cdot & \cdot & \qw & \qw & \qw & \qw & \qw & \qw & \push{\rule{.5em}{0em}} & \qw & \qw & \qw & \qw & \qw & \qw & \qw & \qw & \qw & \qw & \qw & \qw & \qw & \qw & \qw & \qw & \qw & \qw & \qw & \ctrl{-3} & \targ & \gate{Z_4} & \targ & \ctrl{-3} & \gate{H} & \qw & \qw & \qw & \qw & \qw & \qw & \push{\rule{.7em}{0em}} & \qw & \qw & \qw & \qw& \qw &\cdot & \cdot & \cdot }}
    \end{subfigure}
    \caption{\em $n$-MCT gate decomposed into Clifford+$Z_N$-based fault tolerant circuit. The overall circuit is shown in three subfigures.}
	\label{fig:decompnMCT}
\end{figure*}

We establish the correctness of the decomposition using the proposed method. For ease of reading, the decomposed $n$-MCT circuit has been split into three parts  $P_1$, $P_2$ and $P_3$, as shown in Fig.~\ref{fig:decompnMCT}.
%
\begin{itemize}
\item [Case 1:] All control lines are set to 0, i.e., $c_i =0 $, \mbox{$1 \le i \le n-1$}. 

From Fig.~\ref{fig:lowermct}, it can be observed that all the conditional phase gates, CNOT-gates 
and $k$-MCT gates~($k<n$) are disabled. Subsequently, the net effect of a pair of H-gates is identity. Therefore, the value of the target qubit does not change.  

\item [Case 2:] One or more control lines~($c_i$) are assigned 0, whereas the remaining control lines are set to 1.

In this condition, all the MCT-gates as well as conditional gates associated with control lines $c_i$ will be disabled. As a result, the conditional gates followed by these disabled MCT-gates remain active but get canceled by opposite polarity active conditional gate associated with control bits other than $c_i$. Therefore, there is no effect on the target qubit under these settings of the control lines.
	
\item [Case 3:] All the control lines are set to 1. 

In Fig.~\ref{fig:lowermct}, each of the right most conditional phase gates over target line, followed by  the $k$-MCT gates~($k<n$) are disabled. However, all the conditional phase gates that independent to all these $k$-MCT gates are enabled. The
overall effect can be expressed as product of $Z_2.Z_4 \ldots Z_{2^{n-2}}$, which can be summarized as a geometric progression. 
	\begin{align}
	Z_{2^{n-2}}.\prod_{i=1}^{n-2} Z_{2^i}
	&= Z_{2^{n-2}}.\prod_{i=1}^{n-2} Z^{\frac{1}{2^i}} \nonumber \\
	&= Z^{\frac{1}{2}}.Z^{\frac{1}{2^2}} \ldots Z^{\frac{1}{2^{n-2}}}.Z^{\frac{1}{2^{n-2}}} \nonumber  \\
	&= Z^{\frac{1}{2}+{\frac{1}{2^2}}.......+{\frac{1}{2^{n-2}}}}.Z^{\frac{1}{2^{n-2}}} \nonumber  \\
	&= Z^{\sum_{i=1}^{n-2} a.r^{n-3}}.Z^{\frac{1}{2^{n-2}}} \nonumber  \\
	&\text{~~~~~~where } a=r={1/2} \nonumber \\
	&= Z^{\frac{{\frac{1}{2}}({1-{({\frac{1}{2})}^{n-2}}})}{1-{\frac{1}{2}}}}.Z^{\frac{1}{2^{n-2}}} \nonumber  \\
	&= Z^{1-\frac{1}{2^{n-2}}}.Z^{\frac{1}{2^{n-2}}} \nonumber  \\
	&= Z
	\end{align}
Therefore, the  overall effect of all the conditional gates acting over target line is equivalent to $n-1$-controlled Pauli's $Z$-gate, i.e., $n$-MCZ gate. The overall effect due to phase changes of $Z_N$-gates within a pair of $H$-gate which lies over target line in Fig.~\ref{fig:lowermct} is same as $n$-MCT gate.
\end{itemize}  

\begin{theorem}
	For $n \geq 4$, an $n$-MCT gate can be exactly decomposed over Clifford+$Z_N$-group without any ancilla.
\end{theorem}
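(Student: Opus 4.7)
The plan is to prove the theorem by induction on $n$, leveraging the three constructions already given in the excerpt: the $H$-conjugation of Fig.~\ref{subfig:n-MCZ}, the $Z_N$ expansion of Fig.~\ref{fig:lowermct}, and the transversal rewrite of controlled phase gates in Fig.~\ref{fig:thm1}. The base case $n=4$ is settled by the explicit Clifford+$Z_N$ realization in Fig.~\ref{fig:prop4mct}, whose correctness is verified constructively in Section~\ref{Dec:MCT}. For the inductive step I would assume that every $k$-MCT gate with $4 \le k < n$ can be exactly decomposed into Clifford+$Z_N$ without ancilla; for $k \in \{2,3\}$ this is the standard Barrenco-style realization over Clifford+$T \subseteq$ Clifford+$Z_N$.

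First I would conjugate the target line of the $n$-MCT by two Hadamards, converting the gate into an $n$-MCZ as in Fig.~\ref{subfig:n-MCZ}. Next I would apply the identity of Fig.~\ref{fig:lowermct}, which rewrites $n$-MCZ as a cascade of $k$-MCT gates ($k<n$), single-control $Z_{2^j}$ phase gates, and CNOTs; by the inductive hypothesis each $k$-MCT admits a Clifford+$Z_N$ realization without ancilla, and substituting these yields a circuit using only CNOTs, $H$, and single-control $Z_{2^j}$ gates. Finally, every remaining singly-controlled phase gate is replaced by its ancilla-free transversal implementation in Fig.~\ref{fig:thm1}, producing a circuit composed solely of $H$, CNOT, and uncontrolled $Z_N$/$Z_N^{\dagger}$ rotations, all members of Clifford+$Z_N$.

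It remains to verify functional equivalence on every computational basis input. I would carry out the case analysis already foreshadowed in the excerpt, splitting on the values of $(c_1,\dots,c_{n-1})$. In Case~1 every controlled gate in Fig.~\ref{fig:lowermct} is disabled and the outer Hadamards cancel. In Case~2 the phase gates surrounding a disabled $k$-MCT must be shown to annihilate in opposite-polarity pairs; this is the crux and is handled by grouping the $Z_{2^j}$ and $Z_{2^j}^{\dagger}$ gates by the $k$-MCT they flank, commuting them past the CNOT fragments that remain active, and observing pairwise cancellation. In Case~3 the accumulated rotation between the two Hadamards telescopes to
\begin{equation}
Z_{2^{n-2}} \cdot \prod_{i=1}^{n-2} Z_{2^i} \;=\; Z^{\,1-1/2^{n-2}} \cdot Z^{\,1/2^{n-2}} \;=\; Z,
\end{equation}
so that the conjugate $H\!\cdot\!Z\!\cdot\!H = X$ flips the target, as required.

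The main obstacle is Case~2, because it requires a careful pairing argument: each partially disabled $k$-MCT injects its flanking phase gates directly onto the target wire, and one must show these injections come in matched $Z_{2^j}$/$Z_{2^j}^{\dagger}$ pairs no matter which subset of controls is zero. I would make this explicit by indexing the phase gates in Fig.~\ref{fig:lowermct} by the unique $k$-MCT they are attached to and by the sign pattern assigned to that MCT in the Gray-code expansion of the $n$-MCZ, then showing that the induced phase contribution factors through $Z^{[c_1 c_2 \cdots c_{n-1}]}$. Once this bookkeeping is in place, the induction closes, no ancillary qubit has been introduced at any stage, and the theorem follows.
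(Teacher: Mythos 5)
Your proposal follows essentially the same route as the paper: the paper's proof consists precisely of the construction via Figs.~\ref{subfig:n-MCZ}, \ref{fig:lowermct} and \ref{fig:thm1} together with the three-case verification on the control lines (including the identical telescoping computation $Z_{2^{n-2}}\cdot\prod_{i=1}^{n-2}Z_{2^i}=Z$), and the theorem is then discharged with the single line that it ``follows from the construction presented above.'' Your write-up is if anything more careful than the original — you cast the recursion as an explicit induction with base case $n=4$ and you correctly flag the opposite-polarity cancellation in Case~2 as the step needing a genuine pairing argument, which the paper asserts but does not spell out.
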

\begin{proof}
	The proof follows from the construction presented above. 
\end{proof}


\subsection{Analysis of the phase count} 
\begin{theorem}[]For $n \geq 4$, an $n$-MCT  gate  can  be realized using Clifford+$Z_N$ library with  $\mathcal{O}(n^2)$ phase count without any ancilla.
	\label{ZN-count}
\end{theorem}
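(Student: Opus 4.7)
The plan is to count phase gates by induction on $n$, following the recursive structure of the decomposition exhibited in Figs.~\ref{fig:lowermct}--\ref{fig:decompnMCT}. Let $P(n)$ denote the total number of $Z_N$ gates produced by the decomposition of an $n$-MCT. First I would analyze the first-level expansion of the $n$-MCZ gate shown in Fig.~\ref{fig:lowermct}: the Jacobi sandwich of $H$-gates contributes no phase gates, while the phase-bracket pattern on the target line introduces on the order of $n$ single-control $Z_{2^j}$ gates (positive and dagger copies together giving at most $2(n-1)$ of them), together with a bounded number of $k$-MCT gates with $k<n$ on the control register.

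Next I would set up the recurrence this induces. Because each recursive expansion peels off one control while contributing only a linear burst of single-control phase gates, I expect to obtain $P(n)\le P(n-1)+c\,n$ for some absolute constant $c$, with base case $P(4)=\mathcal{O}(1)$ (explicitly, the $20$ phase gates exhibited in Fig.~\ref{fig:prop4mct}). Telescoping gives $P(n)\le P(4)+c\sum_{k=5}^{n}k = \mathcal{O}(n^{2})$, which is the desired bound.

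Finally I would invoke the controlled-phase-to-fault-tolerant identity from Fig.~\ref{fig:thm1}, which replaces every single-control $C^{1}(Z_{N/2})$ by three control-free $Z_N$ gates interleaved with two CNOTs. This substitution multiplies the phase count by a fixed constant only, so the $\mathcal{O}(n^{2})$ bound is preserved. The two optimization rules described in Section~\ref{sec:method} (cancellation of adjacent self-inverse phase gates and commutation of phase gates across pairs of CNOTs with a common control input) can then be applied to shrink the hidden constant, but are not needed for the asymptotic claim.

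The hard part will be justifying the linear recurrence $P(n)\le P(n-1)+\mathcal{O}(n)$ rigorously. Concretely, one must verify by direct inspection of the circuit pattern in Fig.~\ref{fig:lowermct} that each first-level expansion reduces to a single smaller MCT whose phase count dominates, rather than branching into two or more subproblems of comparable size (which would give a cubic or even exponential bound). One must also confirm that the auxiliary CNOTs and Hadamards introduced during the recursion carry no phase cost, so that the linear surplus $c\,n$ at each level genuinely captures every $Z_N$ gate produced. Once those two structural facts are checked, the $\mathcal{O}(n^{2})$ conclusion follows immediately from the telescoping sum above, matching the construction already proved correct in the preceding theorem.
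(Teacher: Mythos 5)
Your overall strategy (a linear recurrence $P(n)\le P(n-1)+cn$, telescoped to $\mathcal{O}(n^2)$, with the controlled-phase-to-transversal substitution of Fig.~\ref{fig:thm1} only costing a constant factor) differs from the paper, which never sets up a recurrence: it directly counts the $Z_N$ gates in the three sub-blocks of the final circuit of Fig.~\ref{fig:decompnMCT} as arithmetic progressions of stacks and sums them to the exact value $2n^2+n-16$. A correct recurrence argument would be a legitimate and arguably cleaner alternative, but as written yours has a genuine gap at exactly the point you flag as ``the hard part.''

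The gap is this: the first-level expansion in Fig.~\ref{fig:lowermct} does \emph{not} reduce the $n$-MCZ to a single $(n-1)$-MCT plus $\mathcal{O}(n)$ phase gates, nor to a ``bounded number'' of smaller MCTs. Inspection of the figure shows two mirrored cascades on the control register containing MCT gates of every size from a CNOT up to an $(n-2)$-control Toffoli --- i.e.\ $\Theta(n)$ sub-MCTs whose sizes range over $2,3,\dots,n-1$. The naive recurrence induced by recursing on each of these is of the form $P(n)=2\sum_{k}P(k)+\mathcal{O}(n)$, which solves to exponential, not quadratic, growth. The quadratic bound only emerges after the cancellations the paper invokes (adjacent self-inverse phase gates annihilating, and phase gates commuting across pairs of CNOTs with a common control), which collapse the recursively generated gates of neighbouring cascade elements into the triangular stack pattern actually counted in Fig.~\ref{fig:decompnMCT}. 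So your telescoping sum is not justified by the structure you cite; to repair the argument you would either have to prove the cancellation lemma that turns the branching recursion into a linear one, or abandon the recurrence and count the final optimized circuit directly, as the paper does. The auxiliary observations (Hadamards and CNOTs carry no phase cost; Fig.~\ref{fig:thm1} multiplies the count by a constant) are fine and match the paper.
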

\begin{proof} 
To find phase count of the decomposed circuit in Fig.~\ref{fig:decompnMCT}, we compute the phase count of each block in Fig.~\ref{fig:decompnMCT} and finally sum up the individual phase counts for the entire circuit. 
 In Fig.~\ref{fig:decompnMCT}, some phase gates in each block of Fig.~\ref{fig:decompnMCT} are movable while other phase gates are immovable, i.e., they are shielded by a pair of CNOT gates. 

\noindent \textbf{Phase count of Fig.~\ref{subfig:C1}:} 
Observing the circuit from the input side,  the number of $Z_N$-gates that are movable are placed in two stacks.  The left most stack has $(n-1)$ number switchable $Z_N$-gates whereas {second stack contains $(n-4)$ number of $Z_N$-gates as the uppermost and bottom two lines do not contain any $Z_N$-gate. So, $(2n-5)$ $Z_N$-gates are present in these two stacks.} Furthermore, there are $(n-1)$ stacks of 
$Z_N$ gates boxed between a pair of CNOT gates. The number of gates in each stack decrements by 1 from $n$ $Z_N$ gates to two $Z_N$ gates in the last stack. Therefore, the total number of $Z_N$-gates can be computed as an arithmetic progression series with $a=2$ and $d=1$.
 \begin{align}
&\sum_{k=1}^{n-1}(a+(k-1)d) = {\frac{n.(n+1)-2}{2}={\frac{n^2}{2}+\frac{n}{2}-1}} \nonumber
 \end{align}
 The overall phase count~$ZC(\ref{subfig:C1})$ for the circuit in Fig.~\ref{subfig:C1} is given by
 \begin{equation}
ZC(\ref{subfig:C1}) = {\big(\frac{n^2}{2}+\frac{n}{{2}}-1\big)+(2n-5)={(\frac{n^2}{2}+\frac{5n}{2}-6})} 
\end{equation}
\noindent \textbf{Phase count of Fig.~\ref{subfig:C2}:} The $Z_N$ gates are in the form of an inverted pyramid structure. The count of phase gates in the $Z_N$ gate stacks increases from left to the mid position and then decreases. In the first half, the count of phase gates starts from $3$ number  and increases till $(n-1)$ gates, contributing ($\frac{n^2}{2}-\frac{n}{2}-3$) as phase depth. For the second half, the count of phase gate starts from $(n-3)$ and decreases to $2$, which adds phase count of $\frac{n^2}{2}-\frac{5n}{2}+2$. Therefore, the phase count~$ZC(\ref{subfig:C2})$ for the circuit in Fig.~\ref{subfig:C2} is given by
\begin{equation}
ZC(\ref{subfig:C2}) = \big(\frac{n^2}{2}-\frac{n}{2}-3 \big) + \big(\frac{n^2}{2}-\frac{5n}{2}+2 \big) {= n^2-3n-1}
\end{equation}
\noindent \textbf{Phase count of Fig.~\ref{subfig:C3}:} Similar 
to the previous two sub-circuits, the $Z_N$-gates are placed in two ways i.e. movable and immovable. There are three stacks of movable $Z_N$-gates --- the first stack has $n-3$ gates, the second stack has $(n-5)$ gates and the third one has a single $Z_N$ gate. Thus, the number of movable $Z_N$-gates in this 
sub-circuit is $(2n-7)$. On other hand, this sub-circuit contains stack of immovable $Z_N$-gates 
starting with $3$ $Z_N$ gates and increasing till $n-1$ gates and there is an additional immovable $Z_N$ gate. Thus, the number of immovable $Z_N$ gates  is 
\begin{equation*}
    (\frac{n^2}{2}-\frac{n}{2}-2)
\end{equation*}
The phase count~$ZC(\ref{subfig:C3})$ for the circuit in Fig.~\ref{subfig:C3} is 
\begin{equation}
ZC(\ref{subfig:C3}) = (2n-7) + (\frac{n^2}{2}-\frac{n}{2}-2) = (\frac{n^2}{2}+\frac{3n}{2}-9)
\end{equation}
\noindent \textbf{Phase count of $\mathbf{n}$-MCT:}
By summing up the phase count of the individual subcircuits in Fig.~\ref{fig:decompnMCT}, we obtain the phase count~$ZC(n\text{-MCT})$ of the Clifford+$Z_n$ fault tolerant realization of $n$-MCT gate. 
 \begin{align} 
&~~ZC(n\text{-MCT}) \nonumber \\
&= ZC(\ref{subfig:C1}) + ZC(\ref{subfig:C2}) + ZC(\ref{subfig:C3})  \nonumber \\
&= {(\frac{n^2}{2}+\frac{5n}{2}-6})+(n^2-3n-1)+(\frac{n^2}{2}+\frac{3n}{2}-9) \nonumber\\
&= 2n^2+n-16 = \mathcal{O}(n^2)
 \end{align}  

\end{proof}

\subsection{Analysis of phase depth} 
\begin{theorem} {An $n$-MCT  gate can be decomposed with $\mathcal{O}(n)$  phase depth over  Clifford+$Z_N$-group without any ancilla.}
\label{ZN-depth}
\end{theorem}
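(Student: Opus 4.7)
The plan is to bound the phase depth of the decomposed circuit in Fig.~\ref{fig:decompnMCT} by analyzing each of its three subcircuits separately, mirroring the subcircuit-by-subcircuit approach used for the phase-count analysis in Theorem~\ref{ZN-count}, and then summing the contributions. Recall that the phase depth is the length of the longest chain of $Z_N$-gates that cannot be put in parallel after applying the two optimization rules stated in Section~\ref{sec:method}: cancellation of adjacent self-inverse gates, free commutation of a phase gate across any CNOT that meets its wire as a control, and paired commutation on a target line across two consecutive CNOTs that share a common control.

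First, for each subcircuit I would partition the $Z_N$-gates into two classes, reusing the terminology of the phase-count proof: the \emph{movable} gates which lie outside any CNOT-bounded box and can slide freely along their wire, and the \emph{immovable} gates which are trapped between two CNOTs so that no permissible commutation lets them leave that box. The key structural claim to verify is that inside every such CNOT-bounded box the immovable $Z_N$-gates sit on pairwise distinct qubit wires, so each box contributes exactly one layer to the phase depth regardless of how many phase gates it contains. With this claim in hand, the depth of each subcircuit is bounded by the number of sequentially composed boxes plus a constant-per-wire overhead for absorbing the movable gates into already-occupied cycles.

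Next I would count the sequential boxes in each subcircuit. In Fig.~\ref{subfig:C1} the immovable stacks sit in a staircase of $n-1$ CNOT-bounded boxes, giving $\mathcal{O}(n)$ depth; Fig.~\ref{subfig:C2} carries the inverted-pyramid pattern with at most $2(n-1)$ sequential boxes, again $\mathcal{O}(n)$; and Fig.~\ref{subfig:C3} is structurally symmetric to Fig.~\ref{subfig:C1} and therefore contributes $\mathcal{O}(n)$. The $\mathcal{O}(n)$ movable phase gates per subcircuit can then be scheduled into the existing layers after a brief case analysis on their wire assignments. Summing the three subcircuit depths gives $\mathcal{O}(n) + \mathcal{O}(n) + \mathcal{O}(n) = \mathcal{O}(n)$ total phase depth, which is the claim.

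The main obstacle I anticipate is establishing the structural claim rigorously, namely that every CNOT-bounded stack distributes its phase gates across distinct wires so that no single wire accumulates a chain longer than $\mathcal{O}(n)$. This requires a careful trace through the recursive construction of Section~\ref{Dec:MCT}, because the commutation rules on a target line are asymmetric: a lone CNOT cannot be crossed, so one must verify that whenever two immovable $Z_N$-gates do happen to land on the same wire they are already separated by an intermediate fence that forces them into different layers. Once that invariant is secured, the arithmetic of summing $\mathcal{O}(n)$ layers across the three subcircuits is immediate and the bound follows.
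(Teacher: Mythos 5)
Your proposal follows essentially the same route as the paper's own proof: the same split into the three subcircuits of Fig.~\ref{fig:decompnMCT}, the same movable/immovable classification of $Z_N$-gates carried over from the phase-count analysis, and the same argument that each CNOT-bounded stack executes in a single cycle so the depth is governed by the $\mathcal{O}(n)$ count of sequential stacks per subcircuit. If anything you are more explicit than the paper about the invariant that the gates within a stack occupy distinct wires (the paper asserts per-stack parallelism without comment), but the overall structure and the resulting linear bound coincide with the published argument, which tallies the stacks to obtain $4n-6$ for $n\geq 6$.
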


\begin{proof} We use an approach similar to the proof 
of theorem~~\ref{ZN-count} to prove this theorem.

\noindent \textbf{Phase depth of Fig.~\ref{subfig:C1}:} There are two stacks of movable type of $Z_N$-gates, that require two cycles for execution. On other hand, the immovable $Z_N$-gates are placed in $(n-1)$ stacks, that would require $(n-1)$ cycles for execution. So, the phase depth~$ZD(\ref{subfig:C1})$ of the Fig.~\ref{subfig:C1} is computed as follows. 
\begin{equation}
ZD(\ref{subfig:C1})=\left\{
	\begin{array}{ll}
	4 & n=4  \\
	(n+1) & n\geq 5\\
	\end{array} 
	\right. 
	\end{equation}
\noindent \textbf{Phase depth of Fig.~\ref{subfig:C2}:}
In Fig.~\ref{subfig:C2}, there are $(n-3)$ stacks with increasing  number
of $Z_N$ gates and $(n-4)$ stacks with decreasing number
of $Z_N$ gates. Therefore, the phase depth~$ZD(\ref{subfig:C2})$ of Fig.~\ref{subfig:C2} is given by :
\begin{equation}
    ZD(\ref{subfig:C2}) = 2n - 7
\end{equation}
\noindent \textbf{Phase depth of Fig.~\ref{subfig:C3}:}
There are $(n-2)$ stacks of immovable  $Z_N$ gates.  One movable $Z_N$ gate
can be executed in parallel with either stack of immovable $Z_N$ gates and hence this does not contribute to additional phase depth. In addition, there are two remaining stacks of movable $Z_N$ gates, with all gates in each stack being executed in parallel. Thus, the phase depth~$ZD(\ref{subfig:C3})$ of this sub-circuit is given by :
\begin{equation}
ZD(\ref{subfig:C3})=\left\{
	\begin{array}{ll}
	2 & n=4  \\
	4 & n=5 \\
	n & n\geq 6\\
	\end{array} 
	\right.
	\end{equation}
\noindent \textbf{Phase depth of $\mathbf{n}$-MCT:}
The phase depth~$ZD(n\text{-MCT})$ of the proposed
decomposition of $n$-MCT gate using Clifford+$Z_N$ library as shown in Fig.~\ref{fig:decompnMCT} is computed as follows.
 \begin{align} 
&~~ZD(n\text{-MCT}) \nonumber \\
&= ZD(\ref{subfig:C1}) + ZD(\ref{subfig:C2}) + ZD(\ref{subfig:C3})  \nonumber \\
&=\left\{
	\begin{array}{ll}
	7 & n=4  \\
	13 & n=5 \\
	4n-6 & n\geq 6\\
	\end{array} 
	\right.
&= \mathcal{O}(n) \label{ZN-finaldepth}
\end{align}

\end{proof}

\section{Design of unit phase depth based structure for $n$-MCT gate}\label{sec:unitph}
\noindent Quantum circuits with low phase depth are important~\cite{Bravyi}. In this regard, we present a unit phase depth decomposition of $n$-MCT gate, at the cost of additional ancillary lines. 
\begin{figure*}[ht]
	\circspace
	\centerline{
		\Qcircuit @C=0.25em @R=0.1em {
			\lstick{c_1} & \qw & \qw & \ctrl{8} & \qw & \qw & \qw & \qw & \qw & \qw & \ctrl{7} & \qw & \qw & \qw & \qw & \qw & \qw & \ctrl{11} & \ctrl{13} & \qw & \qw & \qw & \qw & \qw & \qw  & \gate{Z_8} & \qw & \qw & \qw & \qw & \qw & \qw  & \ctrl{13} & \ctrl{11} & \qw & \qw & \qw & \qw & \qw & \qw & \ctrl{7}& \qw & \qw & \qw & \qw & \qw & \qw & \ctrl{8} & \qw & \qw & \rstick{c_1} \\ 
			\lstick{c_2} & \qw & \qw & \qw & \ctrl{8} & \qw & \qw & \qw & \qw & \ctrl{7} & \qw & \ctrl{5} & \qw & \qw & \qw & \qw & \ctrl{11} & \qw & \qw & \qw & \qw & \qw & \qw & \qw & \qw & \gate{Z_8} & \qw & \qw & \qw & \qw & \qw & \qw & \qw & \qw  & \ctrl{11} & \qw & \qw & \qw & \qw & \ctrl{5} & \qw & \ctrl{7} & \qw & \qw & \qw & \qw & \ctrl{8} & \qw & \qw & \qw & \rstick{c_2}\\ 
			\lstick{c_3} & \qw & \qw & \qw & \qw & \ctrl{8} & \qw & \qw & \ctrl{7} & \qw & \qw & \qw & \ctrl{3} & \qw & \qw & \ctrl{11} & \qw & \qw & \qw & \qw & \qw & \qw & \qw & \qw & \qw & \gate{Z_8}  & \qw & \qw & \qw & \qw & \qw & \qw & \qw & \qw & \qw & \ctrl{11} & \qw & \qw & \ctrl{3} & \qw & \qw & \qw & \ctrl{7} & \qw & \qw & \ctrl{8}  & \qw & \qw & \qw & \qw & \rstick{c_3} \\ 
			\lstick{t_{in}} & \qw & \gate{H} & \qw & \qw & \qw & \ctrl{8} & \ctrl{7} & \qw & \qw & \qw & \qw & \qw & \ctrl{1} & \ctrl{9} & \qw & \qw & \qw & \qw & \qw & \qw & \qw & \qw & \qw & \qw & \gate{Z_8} & \qw & \qw & \qw & \qw & \qw & \qw & \qw & \qw & \qw & \qw & \ctrl{9} & \ctrl{1} & \qw & \qw & \qw & \qw & \qw & \ctrl{7} & \ctrl{8}  & \qw & \qw & \qw & \gate{H} & \qw & \rstick{t_{out}}\\ 
			\lstick{\ket{0}} & \qw & \qw & \qw & \qw & \qw & \qw & \qw & \qw & \qw & \qw & \qw & \qw & \targ & \qw & \qw & \qw & \qw & \qw & \targ & \qw & \qw & \qw & \qw & \qw & \gate{Z_8} & \qw & \qw & \qw & \qw & \qw & \targ & \qw & \qw & \qw & \qw & \qw & \targ  & \qw & \qw & \qw & \qw & \qw & \qw & \qw & \qw & \qw & \qw & \qw & \qw & \rstick{\ket{0}}\\ 
			\lstick{\ket{0}} & \qw & \qw & \qw & \qw & \qw & \qw & \qw & \qw & \qw & \qw & \qw & \targ & \qw & \qw & \qw & \qw & \qw & \qw & \qw & \targ & \qw & \qw & \qw & \qw & \gate{Z_8} & \qw & \qw & \qw & \qw & \targ & \qw & \qw & \qw & \qw & \qw & \qw & \qw & \targ & \qw & \qw & \qw & \qw & \qw & \qw & \qw & \qw & \qw & \qw & \qw & \rstick{\ket{0}}\\ 
			\lstick{\ket{0}} & \qw  & \qw  & \qw  & \qw  & \qw   & \qw & \qw & \qw & \qw  & \qw & \targ & \qw & \qw & \qw & \qw & \qw  & \qw & \qw & \qw  & \qw & \targ & \qw  & \qw  & \qw & \gate{Z_8}  & \qw  & \qw  & \qw & \targ & \qw & \qw & \qw & \qw & \qw  & \qw & \qw & \qw  & \qw & \targ  & \qw  & \qw  & \qw  & \qw  & \qw   & \qw & \qw & \qw & \qw  & \qw & \rstick{\ket{0}}\\ 
			\lstick{\ket{0}} & \qw  & \qw & \qw & \qw & \qw  & \qw  & \qw  & \qw  & \qw  & \targ & \qw & \qw  & \qw  & \qw  & \qw  & \qw  & \qw  & \qw & \qw & \qw & \qw & \targ  & \qw  & \qw & \gate{Z_8} & \qw & \qw & \targ  & \qw & \qw  & \qw  & \qw  & \qw  & \qw  & \qw  & \qw & \qw & \qw & \qw & \targ & \qw  & \qw & \qw & \qw & \qw  & \qw  & \qw  & \qw  & \qw & \rstick{\ket{0}}\\  
			\lstick{\ket{0}} & \qw & \qw & \targ & \qw & \qw & \qw & \qw & \qw & \targ & \qw & \qw & \qw & \qw & \qw & \qw & \qw & \qw & \qw & \qw & \qw & \qw & \qw & \qw & \qw & \gate{Z^\dag_8}  & \qw & \qw & \qw & \qw & \qw & \qw & \qw & \qw & \qw & \qw & \qw & \qw & \qw & \qw & \qw & \targ & \qw & \qw & \qw & \qw & \qw & \targ & \qw & \qw & \rstick{\ket{0}}\\ 
			\lstick{\ket{0}} & \qw & \qw & \qw & \targ & \qw & \qw & \qw & \targ & \qw & \qw  & \qw & \qw & \qw & \qw & \qw & \qw & \qw & \qw & \qw & \qw & \qw & \qw & \ctrl{5} & \qw & \gate{Z^\dag_8} & \qw & \ctrl{5}  & \qw & \qw  & \qw & \qw & \qw & \qw & \qw & \qw & \qw & \qw & \qw & \qw & \qw & \qw & \targ & \qw & \qw & \qw & \targ & \qw & \qw & \qw & \rstick{\ket{0}}\\ 
			\lstick{\ket{0}} & \qw & \qw & \qw & \qw & \targ & \qw & \targ & \qw & \qw & \qw & \qw & \qw & \qw & \qw & \qw & \qw & \qw & \qw & \qw & \qw & \qw & \ctrl{-3} & \qw & \qw & \gate{Z^\dag_8} & \qw & \qw & \ctrl{-3} & \qw & \qw & \qw & \qw & \qw & \qw & \qw & \qw & \qw & \qw & \qw & \qw & \qw & \qw & \targ & \qw & \targ  & \qw & \qw & \qw & \qw & \rstick{\ket{0}}\\ 
			\lstick{\ket{0}} & \qw & \qw & \qw & \qw & \qw & \targ & \qw & \qw & \qw & \qw & \qw & \qw & \qw & \qw & \qw & \qw & \targ & \qw & \qw & \qw & \ctrl{-5} & \qw & \qw & \qw & \gate{Z^\dag_8} & \qw & \qw & \qw & \ctrl{-5} & \qw & \qw & \qw & \targ & \qw & \qw & \qw & \qw & \qw & \qw & \qw & \qw & \qw & \qw & \targ & \qw & \qw & \qw & \qw & \qw & \rstick{\ket{0}}\\ 
			\lstick{\ket{0}} & \qw & \qw & \qw & \qw & \qw & \qw & \qw & \qw & \qw & \qw & \qw & \qw & \qw & \targ & \qw & \targ & \qw & \qw & \qw & \ctrl{-7} & \qw  & \qw & \qw & \ctrl{2} & \gate{Z^\dag_8} & \ctrl{2} & \qw  & \qw & \qw & \ctrl{-7} & \qw & \qw & \qw & \targ & \qw & \targ & \qw & \qw & \qw & \qw & \qw & \qw & \qw & \qw & \qw & \qw & \qw & \qw & \qw & \rstick{\ket{0}}\\ 
			\lstick{\ket{0}} & \qw & \qw & \qw & \qw & \qw & \qw & \qw & \qw & \qw & \qw & \qw & \qw & \qw & \qw & \targ & \qw & \qw & \targ & \ctrl{-9} & \qw & \qw & \qw & \qw & \qw & \gate{Z^\dag_8} & \qw & \qw & \qw & \qw & \qw & \ctrl{-9} & \targ & \qw & \qw & \targ & \qw & \qw & \qw & \qw & \qw & \qw & \qw & \qw & \qw & \qw & \qw & \qw & \qw & \qw & \rstick{\ket{0}}\\ 
			\lstick{\ket{0}} & \qw & \qw & \qw & \qw & \qw & \qw& \qw & \qw & \qw & \qw & \qw & \qw & \qw & \qw & \qw & \qw & \qw & \qw & \qw & \qw & \qw & \qw & \targ & \targ & \gate{Z^\dag_8} & \targ & \targ & \qw & \qw & \qw & \qw & \qw & \qw& \qw & \qw & \qw & \qw & \qw & \qw & \qw & \qw & \qw & \qw & \qw & \qw & \qw & \qw & \qw & \qw & \rstick{\ket{0}} 
	}}
	\caption{\em Fault tolerant unit phase depth decomposition of $4$-MCT gate using Clifford+$Z_N$ library.}
	\label{fig:onemore}
	\vspace{-0.2cm}
\end{figure*}
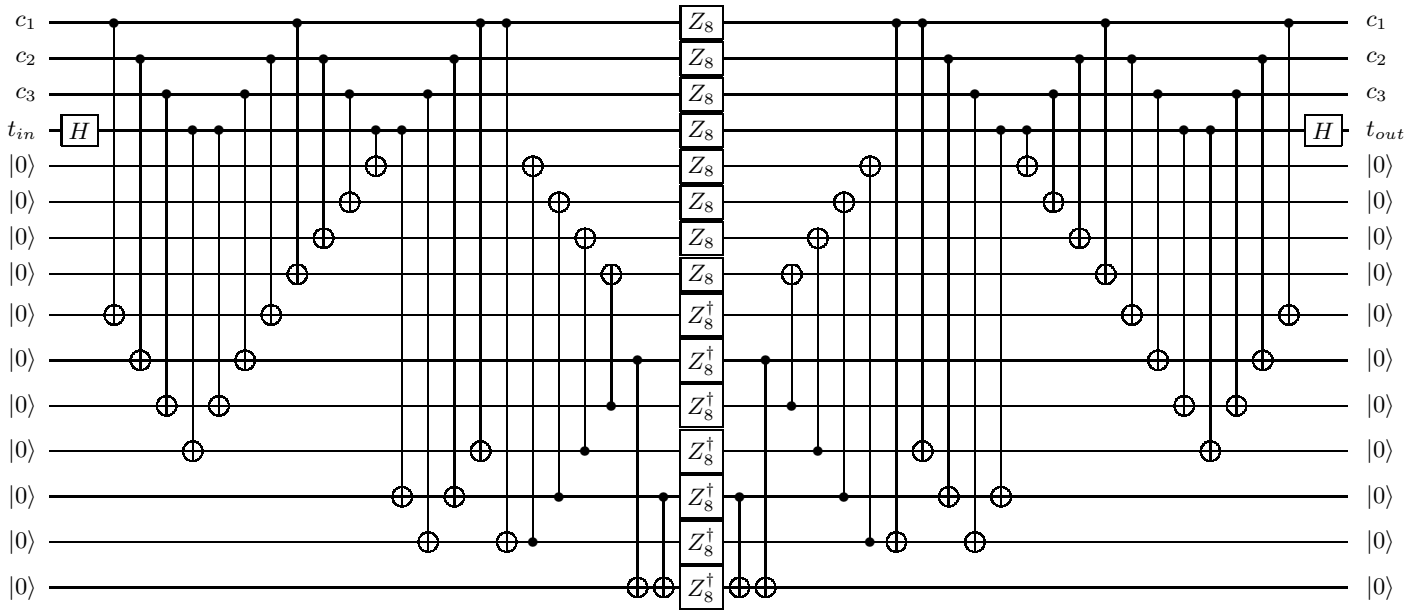
\begin{theorem} 
For any $n \geq 4$, an $n$-MCT gate can be simulated by Clifford+$Z_N$-group in unit phase depth, at the cost of $\mathcal{O}(2^n)$ phase count and $\mathcal{O}(2^n)$ ancilla lines.
\end{theorem}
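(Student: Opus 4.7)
\noindent The plan is to achieve unit phase depth by precomputing, onto $\mathcal{O}(2^n)$ fresh ancillas, every subset-parity of the $n$ qubits that the Clifford$+Z_N$ decomposition conditions on, and then firing all $Z_{2^{n-1}}$ phase gates simultaneously on disjoint wires. The algebraic engine is the Boolean identity, obtained by substituting $b=(1-(-1)^b)/2$ into the product and expanding,
\begin{equation*}
  \prod_{i=1}^n b_i \;=\; \frac{1}{2^{n-1}}\sum_{\emptyset \neq S \subseteq [n]} (-1)^{|S|+1}\, \bigoplus_{i\in S} b_i,
\end{equation*}
applied with $b_1,\ldots,b_{n-1}$ the control bits and $b_n$ the target after it has been conjugated by $H$. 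Exponentiating by $i\pi$ turns the $n$-MCZ gate of Fig.~\ref{subfig:n-MCZ} into a product of $2^n-1$ commuting single-qubit $Z_{2^{n-1}}$ rotations, one per non-empty $S$, with a $\dagger$ whenever $|S|$ is even. For $n=4$ this yields $15$ phase gates of type $Z_8$, matching the phase layer in Fig.~\ref{fig:onemore}.

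The circuit is assembled in five steps. First, apply $H$ to the target. Second, allocate one ancilla $a_S$ (prepared in $|0\rangle$) for every non-singleton, non-empty $S \subseteq [n]$; singleton subsets need no ancilla since a single bit equals its own parity. The ancilla count is $2^n - n - 1 = \mathcal{O}(2^n)$. Third, compute $p_S = \bigoplus_{i\in S} b_i$ onto each $a_S$ using $|S|$ CNOTs from the qubits of $S$ into $a_S$; the fan-in blocks act on disjoint ancillas and contain no phase gates. Fourth, in a single parallel time slice apply $Z_{2^{n-1}}$ (or its dagger, when $|S|$ is even) on the wire carrying $p_S$: on $a_S$ when $|S| \geq 2$, and on the corresponding original qubit when $|S|=1$. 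All $2^n-1$ phase gates lie on disjoint wires, so this layer has phase depth exactly $1$. Fifth, uncompute the parities by replaying the CNOT blocks in reverse (CNOTs are self-inverse), returning every $a_S$ to $|0\rangle$; a final $H$ on the target closes the $n$-MCT.

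Correctness is a direct application of the identity. On a computational-basis input $|b_1\cdots b_n\rangle|0\rangle^{\otimes\text{anc}}$, after the fan-in CNOTs each ancilla $a_S$ carries $p_S(b)$; the phase layer multiplies the joint state by $\prod_S \exp\!\bigl((-1)^{|S|+1}\, i\pi\, p_S(b)/2^{n-1}\bigr) = \exp(i\pi\prod_i b_i)$, which is exactly the $n$-MCZ action; the reverse CNOTs then deterministically reset every $a_S$ to $|0\rangle$. By linearity this extends to arbitrary superpositions, and the enclosing $H$-gates promote $n$-MCZ to $n$-MCT. The resource tally is then immediate: phase count $2^n - 1 = \mathcal{O}(2^n)$, ancilla count $\mathcal{O}(2^n)$, and phase depth $1$. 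The main obstacle is justifying the combinatorial identity cleanly (elementary but easy to misplace a factor of $2$) and scheduling the CNOT fan-in so that every $a_S$ finishes its parity write before the single phase layer fires; this is possible because disjoint ancillas admit parallel CNOT targets, and regardless of how the CNOT layers above and below are sequenced the only phase-bearing slice remains the middle one, contributing phase depth exactly $1$.
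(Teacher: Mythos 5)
Your proposal is correct and follows essentially the same route as the paper: decompose $n$-MCT into $H$-conjugated $n$-MCZ, expand the product $\prod_i b_i$ via the modulo-2 inclusion--exclusion identity into signed subset parities, compute each non-singleton parity onto a dedicated ancilla with CNOTs, fire all $2^n-1$ phase gates ($Z_{2^{n-1}}$ or its dagger according to the parity of $|S|$) in a single slice, and uncompute. The only difference is presentational --- you state and use the general-$n$ identity directly, whereas the paper works out the $n=4$ case explicitly and asserts the extension --- and your resource counts ($2^n-1$ phase gates, $2^n-n-1$ ancillas, phase depth $1$) match the paper's.
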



\begin{proof} 
 Prior to verifying this theorem, we provide the mathematical representation of two circuit structures, as shown in Fig.~\ref{fig:mathcirc}. 
\begin{figure}[h]
\begin{subfigure}[t]{0.4\columnwidth}
	\circspace
\caption{}
\label{fig:small}
	\centerline{
\Qcircuit @C=.8em @R=.7em {\lstick{\ket{x}} & \qw &\gate{T} & \qw & \rstick{T\ket{x}}}}
\end{subfigure}
\begin{subfigure}[t]{0.5\columnwidth}
\circspace
\caption{}
\label{fig:large}
\centerline{
\Qcircuit @C=.8em @R=.7em { \lstick{\ket{x}} & \ctrl{1} &\qw & \ctrl{1} & \qw & \rstick{\ket{x}}\\
	  \lstick{\ket{y}} & \targ & \gate{T} & \targ & \qw & \rstick{T^{x\oplus y} \ket{y}}
}}
\end{subfigure}
\caption{\em $T^x\ket{x}$ and $T^{x\oplus y} \ket{xy}$ correspond to the mathematical form of circuits (\subref{fig:small}) and (\subref{fig:large})  respectively. }
\label{fig:mathcirc}
\end{figure}

Without loss of generality, we provide a constructive proof for realizing a unit phase depth fault tolerant circuit for $4$-MCT.
The relation between input and output qubits  of  the $4$-MCT gate is based on Kronecker product of unitary matrix, as shown in \eqref{eq:4mct}.
\begin{equation}\label{eq:4mct}
\ket{c_1 c_2 c_3 t_{out}} = (I^{\otimes 3} \otimes H)(C^3(Z))(I^{\otimes 3} \otimes H) \ket{c_1 c_2 c_3 t_{in}}
\end{equation}
where $t_{out}={t_{in}\oplus c_{1}c_{2}c_{3}}$ and $t_{in}=t$. We consider the decomposition of $4$-MCT into pair of H gates and $4$-MCZ gate as shown in Fig.~\ref{subfig:n-MCZ} and then proceed to decompose the $4$-MCZ gate into a circuit with control free phase gates with unit phase depth. For simplicity of representation, we consider $a$, $b$, $c$, $d$ as inputs and $p$, $q$, $r$, $s$ as outputs of the $4$-MCZ gate. The relationship between the input and the outputs can be expressed by \eqref{eq:pqrs}.
\begin{align}
\ket{pqrs} &= (-1)^{abcd}\ket{abcd} \nonumber \\
 &= (Z)^{abcd}\ket{abcd} \nonumber \\ 
 &= (Z)^\frac{8abcd}{8}\ket{abcd}  \label{eq:pqrs}
\end{align}
To decompose unitary operators multiple control inputs 
in terms of elementary gates, the exponentiation of the minterm~($8abcd$) has to be decomposed into exponentiation of single qubit variables. This is achieved by the modulo-2 inclusion and exclusion rule~\cite{Biswal}. $+$ and $-$ symbols represent the usual arithmetic operator while $\oplus$ represents XOR.
\begin{align}
8abcd &= a+b+c+d- a\oplus b-a\oplus c \nonumber\\
&-a\oplus d -b\oplus c 
- b\oplus d -c\oplus d \nonumber\\ 
&+ a\oplus b \oplus c + a\oplus b \oplus d  + a\oplus c \oplus d  \nonumber \\
&+ b\oplus c \oplus d - a\oplus b \oplus c \oplus d 
\label{moduloexp}
\end{align}
Substituting~\eqref{moduloexp} in \eqref{eq:pqrs}, we get the control free phase get decomposition for $4$-MCZ gate. 
\begin{align}
\ket{pqrs}&=Z^a_8Z^b_8Z^c_8Z^d_8{Z^\dag}^{a\oplus b}_8{Z^\dag}^{a\oplus c}_8{Z^\dag}^{a\oplus d}_8{Z^\dag}^{b\oplus c}_8{Z^\dag}^{b\oplus d}_8{Z^\dag}^{c\oplus d}_8 \nonumber \\
&Z^{a\oplus b \oplus c}_8Z^{a\oplus b \oplus d}_8Z^{a\oplus c \oplus d}_8Z^{b\oplus c \oplus d}_8{Z^\dag}^{a\oplus b \oplus c \oplus d}_8\ket{abcd}
\label{mathexp}
\end{align}

The circuit corresponding to \eqref{mathexp} can be be 
obtained by using expanding the expression similar to the technique presented in Fig.~\ref{fig:mathcirc}, with the ancillary inputs initialized to $\ket{0}$. 
Fig.~\ref{fig:onemore} presents the equivalent quantum circuit of Fig.~\ref{fig:4mct} in the form of transversal primitive unitary operators with unit phase-depth. It is straightforward to extend the proposed approach  for $n$-MCT gate, given any arbitrary $n\ge 4$. 

In Fig.~\ref{fig:onemore}, all the phase gates  can be executed in one cycle which implies that the constructed fault-tolerant circuit has unit phase depth.

\noindent \textbf{Analysis of phase count:} The decomposition of $4$-MCT in Fig.~\ref{fig:onemore} has $8$ $Z_8$-gates~($2^3-1$) and equal number of $Z^\dag_8$ gates. We can infer the phase count by extrapolating the proposed decomposition approach. For any arbitrary $n \ge 4$, the phase count of the circuit is equal to $(2^{n}-1)$.

\noindent \textbf{Analysis of ancilla count:} The unit phase depth of the decomposed circuit is feasible due to parallel operations of all the $15$ phase gates, thereby incurring the need for additional $11$~($(2^4-4-1)$) ancillary lines. For decomposing $n$-MCT circuit using Clifford+$Z_N$ library with unit phase depth, $(2^{n}-n-1)$ ancillary lines are required.
\end{proof}



%
%
%
%
%

\section{Related works}\label{sec:exp}
\noindent The idea of decomposition of $n$-MCT gates into cascade of Toffolli gates, using ancillary lines was proposed by Barrenco et al.~\cite{Barrenco}. 
Furthermore, a decomposition technique for Toffoli gate
into fault-tolerant NCV quantum gate library was  proposed as well. The 
 NCV quantum gate library consists  of NOT, CNOT , $CV$ and $CV^\dag$ gates.
Based on \cite[Lemma-~7.3]{Barrenco}, a novel decomposition approach using NCV gate library was reported in \cite{Miller}. Maslov et. al proposed the decomposition of $n$-MCT gate using Peres gates, in the presence of ancillary inputs~\cite{Maslov} to reduce the quantum costs.
Abdessaied et al. proposed Clifford+$T$-based decomposition of $n$-MCT gate
using a single ancillary line, where the $T$-depth of the decomposed circuit is equal to $8(c-2)$, where $c$ is the number of control lines~($c \geq 5$)~\cite{Abdessaied}.
In~\cite{Saeedi}, the decomposition approach for 
$n$-MCT gates was proposed using controlled rotation gate, in linear depth using no ancillary lines~\cite{Saeedi}. 
 However, these approaches do not provide fault tolerant
 decomposition of $n$-MCT, except~\cite{Abdessaied}.  However, the approach proposed in \cite{Abdessaied} uses an ancilla input. Therefore, these approaches cannot by directly compared to the work proposed in this work.
 Our approach realizes $n$-MCT gate with phase depth roughly 50\% of the proposed approach in~\cite{Abdessaied}, without using any ancillary inputs.

\section{Conclusion}\label{sec:conc}
\noindent Universal fault-tolerant quantum computation needs error-free operations consisting of long sequence of transversal primitive operators. As conditional Toffoli gates form the building block for quantum algorithms, it is important to 
decompose these gates into a fault tolerant structure.  


To achieve this goal, we have proposed a new fault tolerant quantum gate library, namely the Clifford+$Z_N$ library.  Using the proposed library, 
we have presented a novel ancilla free decomposition algorithm for $n$-MCT into fault tolerant circuit with linear phase depth using Clifford+$Z_N$ library.

Due to the low coherence time,  it is necessary to run all operations prior to qubit decoherence. This entails the need for quantum circuits with low phase depth, as the phase depth determines the latency of the circuit. We have proposed a novel approach to decompose $n$-MCT into a fault-tolerant circuit with unit phase depth. This approach allows the phase depth of the proposed fault tolerant circuit to remain constant, invariant of the number of control lines in the MCT gate. The proposed approach paves a path for scalable shallow phase depth circuit construction to realize quantum algorithms.

\bibliographystyle{unsrt}
\bibliography{ref}
\end{document}